\documentclass[journal]{IEEEtran}
%

\usepackage{verbatim}
\usepackage{url}
\usepackage{graphicx}
\usepackage{balance}
\usepackage{comment}
\usepackage{makecell}
\usepackage{subfigure}
\usepackage{amsmath}
\usepackage{algorithmic}
\usepackage{algorithm}
\usepackage{array}
\usepackage{amsthm}
\usepackage{wrapfig}
\usepackage[suffix=]{epstopdf}
\usepackage{epsfig}

\newtheorem{theorem}{Theorem}[section]

\newtheorem{proposition}[theorem]{Proposition}

%
\ifCLASSINFOpdf
\else
\fi

\hyphenation{op-tical net-works semi-conduc-tor}

\begin{document}
%
\title{Free Side-channel Cross-technology Communication in Wireless Networks}
%
%
%

\author{Song~Min~Kim,
        Shigemi~Ishida,
        Shuai Wang,
        and~Tian~He$^*$
\thanks{This work was supported in part by the National Science Foundation under grants CNS-1525235 and CNS-1444021. A conference paper~\cite{songminFREEBEE} containing preliminary results of this paper appeared in ACM MobiCom 2015.}
\thanks{$^*$ Corresponding author.}
\thanks{S. M. Kim is with the Department
of Computer Science, George Mason University, Fairfax,
VA, 22030 USA e-mail: song@gmu.edu.}
\thanks{S. Ishida was a visiting scholar at the Department
of Computer Science and Engineering, University of Minnesota, Minneapolis,
MN, 55455 USA e-mail: ishida@f.ait.kyushu-u.ac.jp.}
\thanks{S. Wang and T. He are with the Department
of Computer Science and Engineering, University of Minnesota, Minneapolis,
MN, 55455 USA e-mail: \{shuaiw, tianhe\}@cs.umn.edu.}
}

%
%

\markboth{IEEE/ACM TRANSACTIONS ON NETWORKING}%
{Shell \MakeLowercase{\textit{et al.}}: Bare Demo of IEEEtran.cls for IEEE Journals}
%



\maketitle

\begin{abstract}
Enabling direct communication between wireless technologies immediately brings significant benefits including, but not limited to, cross-technology interference mitigation and context-aware smart operation. To explore the opportunities, we propose FreeBee -- a novel cross-technology communication technique for direct unicast as well as cross-technology/channel broadcast among three popular technologies of WiFi, ZigBee, and Bluetooth. The key concept of FreeBee is to modulate symbol messages by shifting the timings of periodic beacon frames already mandatory for diverse wireless standards. This keeps our design generically applicable across technologies and avoids additional bandwidth consumption (i.e., does not incur extra traffic), allowing continuous broadcast to safely reach mobile and/or duty-cycled devices. A new \emph{interval multiplexing} technique is proposed to enable concurrent bro\-adcasts from multiple senders or boost the transmission rate of a single sender. Theoretical and experimental exploration reveals that FreeBee offers a reliable symbol delivery under a second and supports mobility of 30mph and low duty-cycle operations of under 5\%.

\end{abstract}

\begin{IEEEkeywords}
wireless, cross-technology communication.
\end{IEEEkeywords}

%
\IEEEpeerreviewmaketitle

\section{Introduction}\label{sec:introduction}

\IEEEPARstart{D}{iverse} wireless devices used today are specialized to enrich different domains of our daily lives. However, wireless technologies are victims of their own success: spectrum sharing among incompatible  technologies has led to a severe \emph{wireless coexistence problem}~\cite{GollakotaAKS11, RadunovicCG12, srinivasan2010empirical, zhang2011enabling, zhang2013cooperative, ZhaoWZZL14}, significantly impacting the networking performance.

We begin with the recognition that this coexistence is indeed double-sided: although it may cause inefficiency and unfairness in spectrum utilization, it also provides new opportunities because the standards for individual technologies are specialized and hence possess strengths in different areas that are, often the weaknesses of the others. For example, while WiFi has access to a virtually unlimited amount of information via the Internet, it consumes a considerable amount of power, causing battery problems in mobile devices~\cite{Balasubramanian09, ding2013characterizing}. Conversely, the ZigBee network often operates as a stand-alone and has limited information, but is extremely energy efficient. Thus, both networks can be enhanced via mutual supplementation, demonstrating the positive side of coexistence.

In this paper, we propose FreeBee, a cross-technology communication framework that is generic and transparent (i.e., no extra traffic). Our design aims to mitigate the detrimental effect of coexistence while exploring the opportunities behind it. As such, FreeBee sheds the light on the opportunities that cross-technology communication has to offer including, but not limited to, cross-technology interference mitigation and context-aware smart operation. Specifically we achieve this via embedding symbol into beacons by shifting their transmission timing. Although the concept of modulating via signal timings is known as PPM (Pulse Position Modulation), legacy PPM supports only communication between homogeneous devices and requires precise pulse timing, which can be hardly satisfied in wireless coexistence environments with mainly contention-based MACs.

Existing cross-technology communication works~\cite{ChebroluD09, ZhangS13} are technology-specific and require dedicated packets for communication, burdening already-crowded channels with further overhead. In contrast, Freebee utilizes mandatory beacons widely adopted among wireless technologies~\cite{iBeacon,IEEE80211,IEEE802154}, achieving a generic and free-side-channel design. In summary, our original contribution is three-fold:

\begin{itemize}
\item We propose FreeBee,  a novel cross-technology communication framework that allows direct communication between heterogeneous senders and receivers.
In addition, FreeBee allows heterogeneous devices to receive broadcast simultaneously from a sender with overlapping frequencies (e.g., Bluetooth to WiFi and ZigBee) and support a sender with a wider bandwidth
(e.g., WiFi) to reach multiple narrower-band receivers (e.g., WiFi to multi-channel ZigBee).

\vspace{0.1in}

\item FreeBee requires no hardware modification and does not introduce dedicated traffic.  Its existence is transparent to legacy wireless systems.  Our new\emph{ interval multiplexing scheme} supports concurrent transmission and reception of multiple signals.

\vspace{0.1in}

\item We present three prototype implementations: WiFi, ZigBee, and Bluetooth. Results suggest that FreeBee offers reliable symbol delivery within less than a second and supports mobility up to 30mph and duty cycle operation of under 5\%. We also demonstrate a practical use of FreeBee: Inspecting real WiFi deployment patterns in a shopping mall area, FreeBee was found to save 78.9\% of the energy otherwise wasted by the WiFi interface.
\end{itemize}


The paper is organized as follows. Section~\ref{sec:motivation} presents motivations. 
The FreeBee design and features are introduced in Sections~\ref{sec:designbasics} and~\ref{sec:DiscussionChallenges}. 
Sections~\ref{sec:analysis} and~\ref{sec:evaluation} evaluate performance analytically and empirically.
We summarize related work in Section~\ref{sec:relatedworks} and conclude in Section~\ref{sec:conclusion}.

\section{Motivation} \label{sec:motivation}

This section demonstrates a few example use cases among a wide range of benefits the FreeBee technology has to offer.

\begin{figure}[h]
  \centering
  \includegraphics[width=0.47\textwidth]{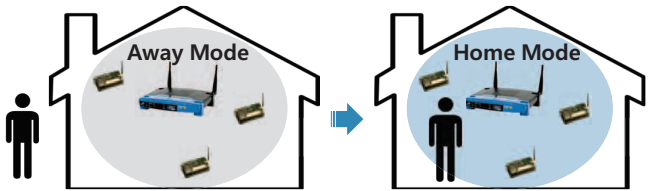}
  \vspace{-2mm}
  \caption{Context-aware home automation}
  \label{fig:context_aware}
\end{figure}

\noindent$\bullet$ \textbf{Benefit to ZigBee -- Smart homes}:
FreeBee enables information sharing directly between technologies, without assistance from dual-radio gateways which are costly ($>$150 USD) and mostly unavailable in real-life settings. One such type of information is user presence, which is accessible by WiFi AP by observing the nodes associated to it. Sharing this information enables other networks to provide context-aware service.
Figure~\ref{fig:context_aware} demonstrates the example of a smart home with ZigBee-assisted appliances. Home WiFi AP first determines whether the resident is away or home (i.e., his/her smart phone is associated or not). Using FreeBee, this information is broadcast from the AP to all the ZigBee nodes inside the home to drive them to the appropriate operation mode. For instance, once the resident leaves they turn to ``away mode'', such as lowering home temperature to an energy-economic value.

\begin{figure}[h]
  \centering
  \includegraphics[width=0.35\textwidth]{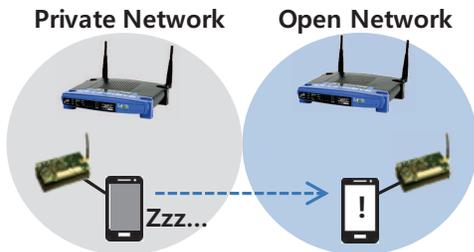}
  \vspace{-2mm}
  \caption{Wake on selective WLAN}
  \label{fig:selective_wlan}
\end{figure}

\noindent$\bullet$ \textbf{Benefit to WiFi -- Mobile devices}: Operating a WiFi Network Interface Card (NIC) continuously depletes precious energy in portable devices~\cite{Balasubramanian09, ding2013characterizing}. To tackle this, ZiFi~\cite{ZhouXXSM10} suggests attaching a low-power ZigBee radio to the device to wake up the WiFi NIC whenever it detects the existence of \emph{any} WiFi AP. While this approach can significantly reduce standby energy, we believe that further savings can be achieved.

Nowadays most of metropolitan areas are overloaded with WiFi APs, many of which are private. Thus, it is still a waste of energy to blindly wake up a WiFi NIC when an arbitrary AP shows up, without knowing if it is accessible or not. As shown in Figure~\ref{fig:selective_wlan}, we avoid this issue by embedding 1-bit accessibility information (i.e., open/private) into WiFi beacons and allowing the attached low-power radio (Bluetooth or ZigBee) to capture this information through FreeBee. Accordingly, the WiFi NIC can wake up only when it finds an open AP. We refer to this function as \emph{wake on selective WLAN}. We note that this approach can be extended to limiting WiFi wakeup only to discovering an AP that fits the user's interest; for instance, when there is an AP that matches a user-defined SSID.

\begin{figure}[h]
  \centering
  \includegraphics[width=0.47\textwidth]{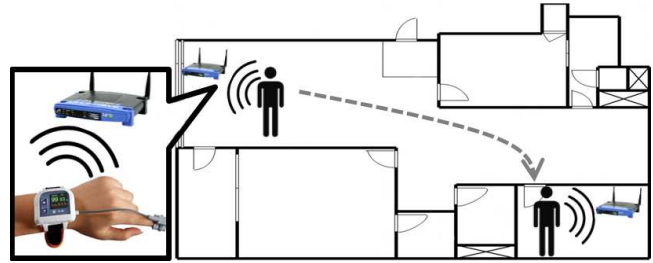}
  \vspace{-2mm}
  \caption{Real-time patient monitoring}
  \label{fig:LBS}
\end{figure}

\noindent$\bullet$ \textbf{Benefit to Bluetooth -- Health care}: Taking advantage of its low-power operation, Bluetooth technology is widely used in portable medical devices, including glucose and heart monitors~\cite{carroll2007continua}. Although FreeBee is not designed to transfer a large volume of medical data, it enables health alerts by embedding urgent information into Bluetooth beacons. The ubiquitous WiFi coverage in most indoor environments today provides a continuous alert service even if patients are away from their Bluetooth-enabled medical station, as shown in Figure~\ref{fig:LBS}. This figure also shows that FreeBee can offer the location of the patient via geolocation provided by WiFi AP, allowing accurate and timely medical actions in case of an emergency.

\begin{figure}[h]
  \centering
  \includegraphics[width=0.45\textwidth]{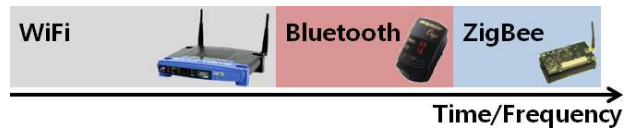} \vspace{-2mm}
  \caption{Cross-technology coordination}
  \label{fig:ch_coordination}
\end{figure}

\noindent$\bullet$ \textbf{Benefit to all -- Channel efficiency}: All WiFi, ZigBee, and Bluetooth networks can benefit from FreeBee via cross-technology channel coordination. As a result of evolving separately without considering each other, channel access schemes in heterogeneous wireless technologies are incompatible, leading to a severe CTI~\cite{HsuWK15, LiangPLT10, ZhangS13, ZhangL13a}. This cross-technology channel access problem can be addressed by allowing explicit communication among different technologies. As demonstrated in Figure~\ref{fig:ch_coordination}, FreeBee essentially allows TDMA or FDMA among heterogeneous wireless platforms, alleviating CTI. For instance, FreeBee realizes mechanisms similar to NAV (Network allocation vector) or RTS/CTS in WiFi for spectrum allocation that is global across technologies.



\section{Framework Design} \label{sec:designbasics}

An overview of FreeBee is presented, followed by technical details including modulation and demodulation.
Without loss of generality, we use communication between WiFi (sender) and ZigBee (receiver) to illustrate the generic design of FreeBee.

\subsection{Design Overview} \label{sec:design_overview}
We propose a cross-technology communication framework in which symbols are embedded into the \emph{timing} of beacon frames. Specifically, we slightly shift the transmission time (advance or delay) of beacon frames, a configurable setting in most WiFi APs deployed today, simply via HTTP protocol. The shift is made in the units of 1.024ms, compliant to the 802.11 standard unit used in beacon scheduling, known as TBTT (Target Beacon Transmission Time)~\cite{IEEE80211}. To ensure a free side-channel operation, FreeBee shifts timing in such a way that the average interval remains the same  as the original setting. Thus, the proposed communication framework is not only transparent, but also does not consume additional bandwidth or energy. This unique aspect enables important information to be broadcast continuously, safely reaching mobile and/or duty-cycled ZigBee receivers whose presence or active periods are a priori unknown to the sender.

However, WiFi beacons cannot be directly captured and recognized by ZigBee nodes, due to incompatible PHY layers. Instead, as WiFi coexists with ZigBee on the 2.4GHz ISM band, a ZigBee receiver statically detects the position of the WiFi beacons in the wireless channel, using its RSSI sensing capability. We note that, as a foundational function for MAC techniques including CSMA, RSSI sampling is common among varieties of wireless standards (e.g., Bluetooth).

\begin{figure}[h]
  \centering
  \includegraphics[width=0.48\textwidth]{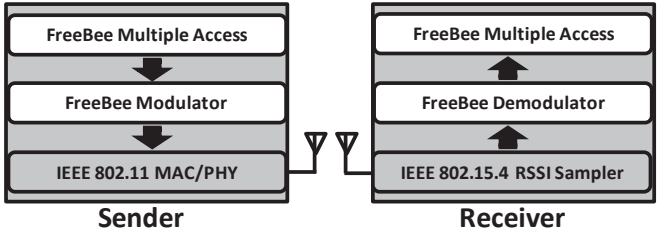}
  \caption{Architecture and scope of FreeBee depicted in white boxes}
  \label{fig:FreeBee_Diagram}
  \vspace{-2mm}
\end{figure}

Figure~\ref{fig:FreeBee_Diagram} depicts the overall architecture and the scope of this paper. Our design spans modulation/demodulation techniques, as well as a multiple access scheme for concurrent communications. We note that the WiFi-ZigBee FreeBee design is based on 802.11 and 802.15.4 standards, and is compatible with 802.11 variants (i.e., a/b/g/n) and 802.15.4-compliant nodes (e.g., TelosB and MICAz) while requiring no hardware modification. In fact, FreeBee can be adopted to enable communication between any heterogeneous wireless platforms as long as the channel is shared. In the following, we first propose a basic version of FreeBee with the simplest form, followed by elaborated designs that enhance the basic version.

\subsection{Basic FreeBee} \label{sec:basic_freebee}
This section describes the basic version of modulation/demodulation. This design assumes that the unmodulated position of the beacon is found during the initial network setup, which we refer to as the \emph{reference position}. This position can be simply obtained by running FreeBee when the AP is sending beacons in their original timing. Modulated beacon positions found later are compared to the reference position for symbol interpretation (i.e., demodulation).

\begin{figure}[h]
\vspace{-2mm}
  \centering
  \includegraphics[width=0.45\textwidth]{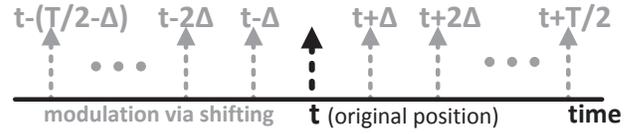}
  \caption{The symbol is embedded by shifting the beacon from its reference position, $t$, where the degree of shift ranges between ($-T/2$, $T/2$]. }
  \label{fig:Synchronized}
  \vspace{-4mm}
\end{figure}

\subsubsection{Modulation} \label{sec:modulation}
For timely advertisement, the 802.11 standard requires APs to periodically broadcast beacons. FreeBee establishes a free side-channel by embedding symbols within the \emph{transmission timing} of these mandatory packets. Referring to Figure~\ref{fig:Synchronized}, let's consider a beacon whose reference position is at $t$, where the interval is $T$. Applying FreeBee, we \emph{shift} the beacon from its reference position in the range of ($-T/2$, $T/2$] to indicate the symbol to be delivered.
The amount of information that can be embedded is determined by $T$ and the granularity of shift, indicated by $\Delta$ in Figure~\ref{fig:Synchronized}. We set $\Delta$ as $1.024ms$ following the beacon scheduling granularity as defined by the 802.11 standard (We note that the information amount can easily be increased by adopting smaller $\Delta$). Under this setting the typical $T$ of $102.4ms$, adopted in the majority of legacy WiFi APs, corresponds to 100$\Delta$, indicating that the beacon can be positioned at 100 different time instances. Thus, beacon shift can express $\left\lfloor log_{2}100\right\rfloor=6$ bits.



Due to incompatible PHY layers, the ZigBee receiver is unable to decode the beacons and thus cannot detect the presence of beacons directly. Therefore we statistically locate beacons by their periodic repetition. For instance, to deliver a FreeBee symbol corresponding to $t-\Delta$ in Figure~\ref{fig:Synchronized}, multiple consecutive beacons are shifted for the same amount (i.e., beacons are transmitted at $t+T-\Delta$, $t+2T-\Delta$, and so on). The required number of beacon repetitions per symbol is decided by the channel noise, which is analyzed in detail in later sections. Lastly, we note that the beacon interval still remains at $T$ in the process of FreeBee transmission, indicating free side-channel operation.

\subsubsection{Demodulation} \label{sec:demodulation}
Here we describe how FreeBee captures and interprets position-modulated beacons for successful demodulation, especially under channel noise (i.e., other ongoing traffic within the same band).
FreeBee demodulation starts from sampling the energy in the channel. This is done by consecutively recording the values obtained from the RSSI register on an 802.15.4-compliant RF chip on the ZigBee node. Upon recording a stream of RSSIs, the captured values are quantized to binaries--0 if below threshold and 1 if above--to indicate clear and busy channels. The threshold is set to be -75dBm following the CCA (Clear Channel Assessment) threshold for the 802.15.4 standard~\cite{IEEE802154}. We note that WiFi also runs CCA, where the threshold is -82dBm~\cite{IEEE80211}. For simplicity, we hereafter will refer to the binary value simply as RSSI. Furthermore, as a RSSI sample in ZigBee is a measurement spanning for $128us$, the sampling rate is set to be $1/128us=7.8KHz$ to avoid time gaps between samples while keeping the rate at its minimum.

\begin{figure}[h]
\vspace{-2mm}
  \centering
  \includegraphics[width=0.43\textwidth]{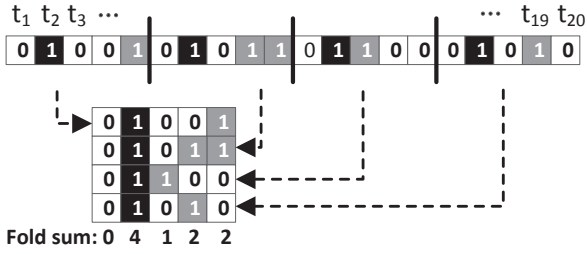}   \vspace{-2mm}
  \caption{Folding example: series of RSSI samples expressed as boxes. Both black and gray indicate a busy channel where the former is a periodic beacon signal with $\lambda=5$, while the latter is random noise induced by traffic or interference. White represents an idle channel. By \emph{folding} the series into a matrix with $P=5$ $(=\lambda)$, the black boxes align column-wise.}
  \label{fig:Folding}
  \vspace{-2mm}
\end{figure}

We then apply \emph{folding} to the obtained RSSI vector, a signal processing technique that allows detecting periodic signal under noise. We note that this technique was originally introduced in~\cite{Staelin69} and was recently featured by ZiFi~\cite{ZhouXXSM10} to detect the presence of WiFi AP.
Given a sampled RSSI vector, folding by $P$ simply cuts the vector into sub-vectors of equal lengths of $P$ and stacks them to yield a matrix. An example of folding is shown in Figure~\ref{fig:Folding}, where a sampled RSSI vector of length 20 is considered. Let the interval of the beacons captured in the vector be $T$ seconds, and the number of samples in $T$ as $\lambda$
In the example $\lambda=5$, and upon folding by $P=\lambda$, RSSIs of beacons (in black) align in a column. the column-wise sum is referred to as the \emph{fold sum}, where the column with the highest fold sum indicates the position of the beacon. Note that the fold sums are likely to be smaller in other columns, as they are induced by either random (thus aperiodic) traffic or beacons with different intervals.


\begin{figure}[h]
\vspace{+2mm}
\centering
\subfigure[\small{Reference Position}]{\label{fig:Filter_Unmodulated}
\includegraphics[width=0.21\textwidth, height=0.15\textheight]{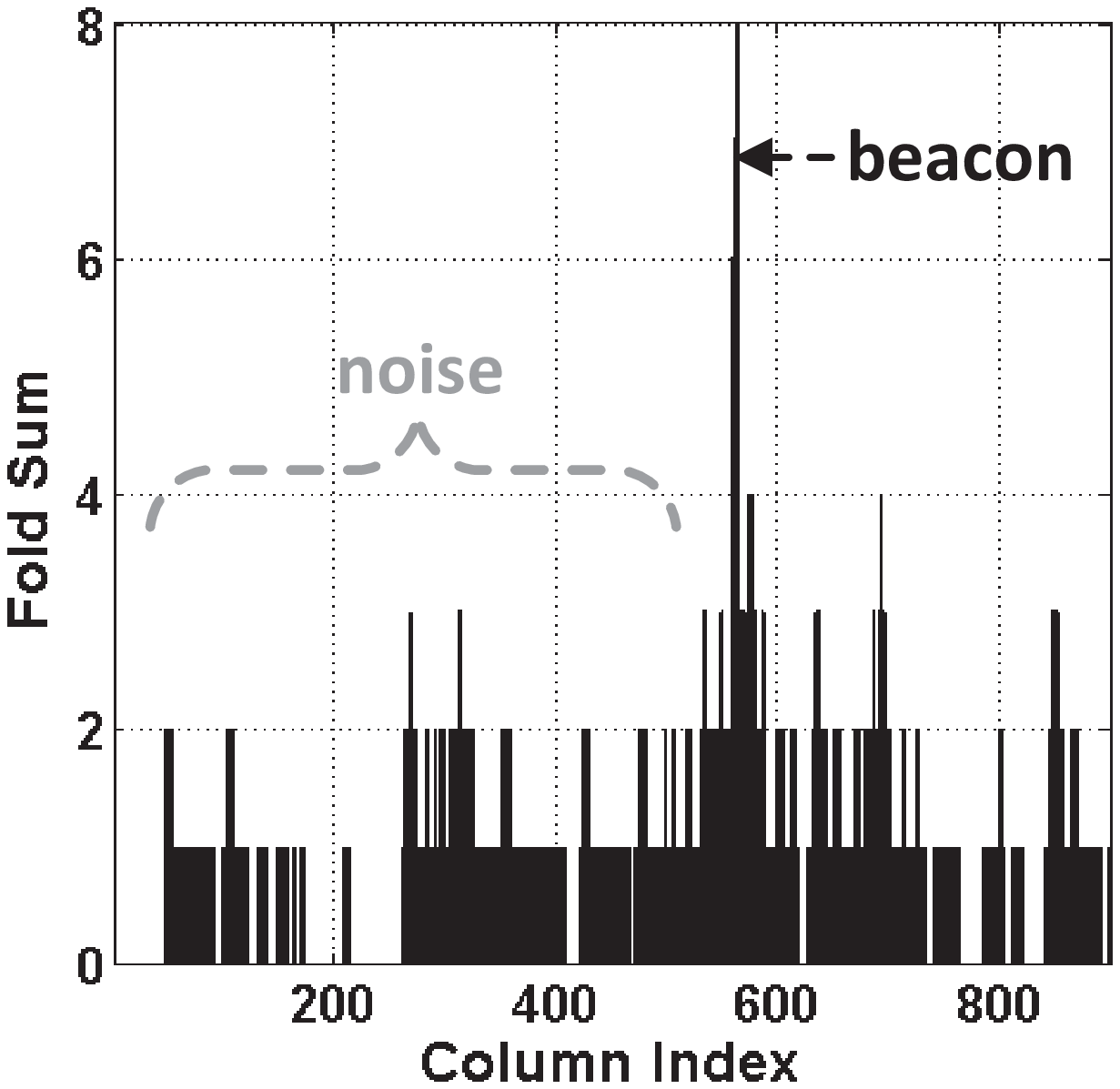}}
\subfigure[\small{Modulated Position}]{\label{fig:Filter_Modulated}
\hspace{3mm}
\includegraphics[width=0.21\textwidth, height=0.15\textheight]{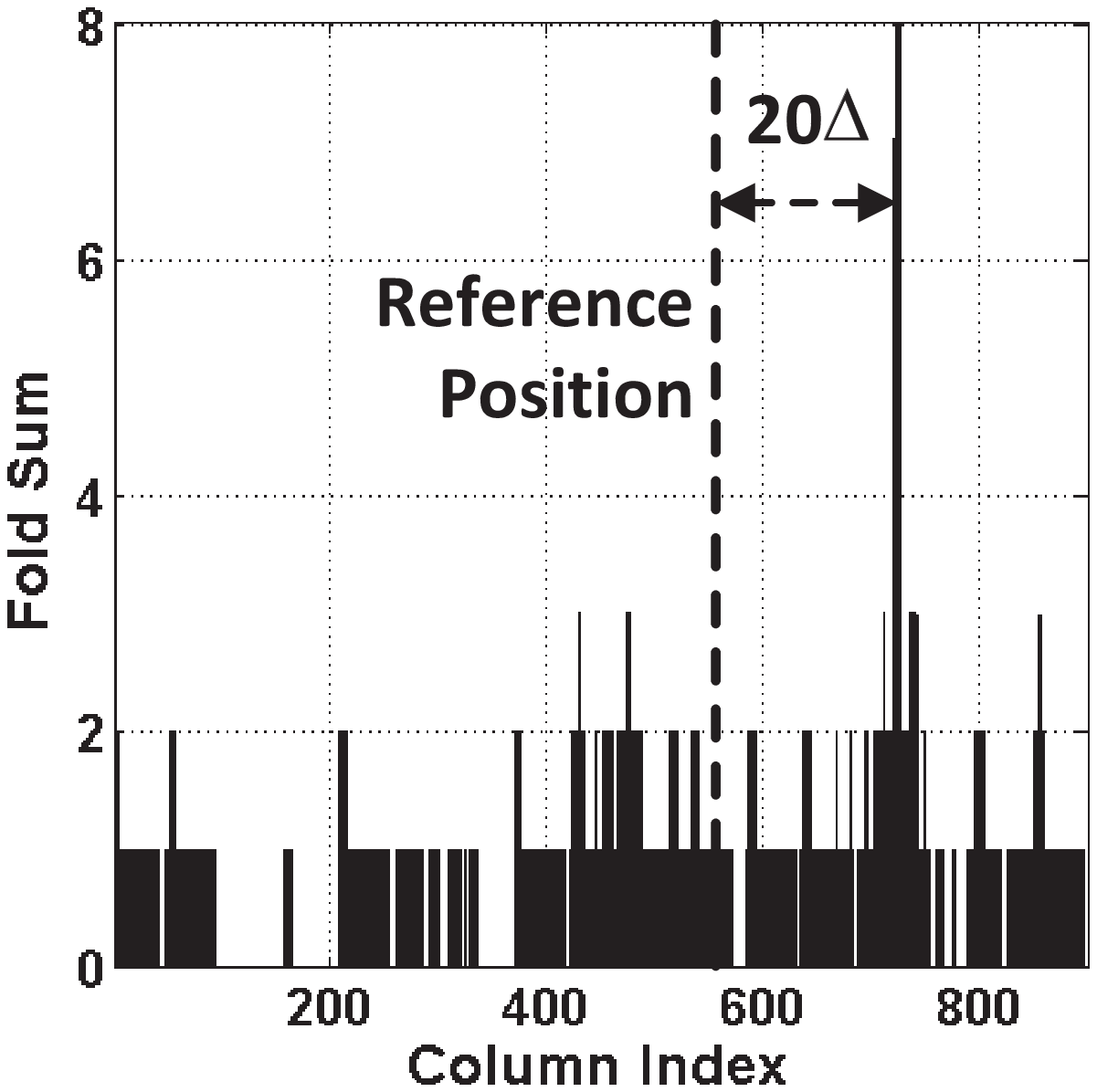}}
\vspace{-2mm}
\caption{Example of FreeBee demodulation in practice, when $T=113\Delta =115.7ms$ and $\lambda=904$}
\label{fig:FilterResult}
\end{figure}

Figure~\ref{fig:FilterResult} presents an example of demodulating FreeBee symbol (20$\Delta$) in practice. To sum up, FreeBee demodulation is process of finding the column corresponding to beacon position, which can easily be achieved by folding and simply picking the column with the maximum fold sum. This same process is used to learn the reference position of the beacon during network initialization and to find the modulated position. The difference of the two positions indicates the symbol within. Other harmonic analysis techniques, such as FFT (Fast Fourier Transform) and autocorrelation, do not yield the position (i.e., phase) of the beacon, and hence are not suitable for our purpose.

\begin{figure}[h]
\vspace{-2mm}
  \centering
  \includegraphics[width=0.4\textwidth]{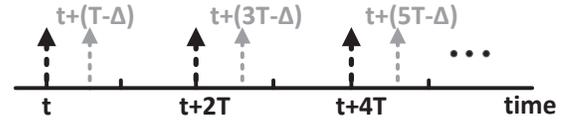}
  \caption{A-FreeBee: the positions of \emph{every other} beacons are shifted (in gray), whereas the degree of shift is $\Delta$ in the example.}
  \label{fig:Asynchronous}
\vspace{-2mm}
\end{figure}

\subsection{Enhanced Feature \#1: Asynchrony} \label{sec:async_freebee}

The basic FreeBee design embeds the symbol as a beacon time shift from the reference position, a concept that requires learning the position beforehand.
We relax this condition to introduce A-FreeBee (Asynchronous FreeBee), freeing our design from any prior knowledge to offer instantaneous communication.


Figure~\ref{fig:Asynchronous} shows Modulation of A-FreeBee. Applying A-FreeBee to beacons with an interval of $T$, beacons are scheduled to construct two streams of beacons (black and gray) with the same interval of $2T$, where one stream (i.e., gray beacons) is a shifted version of the other (i.e., black beacons). This is achieved by shifting \emph{every other} beacon by the amount that corresponds to the symbol to be delivered. The figure demonstrates a case where the symbol corresponds to $\Delta$, indicating one unit of shift. We note that A-FreeBee is also a free-side-channel, as the average interval between consecutive beacons is still $T$.


Under A-FreeBee design, the reference position is no longer required; it simply looks for two beacon streams with the same period by folding with $P=2\lambda$. The embedded symbol is interpreted directly from the phase difference, i.e.,  $Two$ columns with the first and second highest fold sums are found, where the distance between them indicates A-FreeBee symbol.
\begin{figure}[h]
  \centering
  \includegraphics[width=0.33\textwidth]{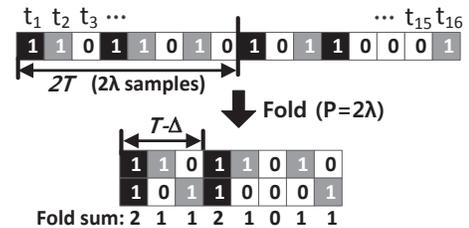}
  \caption{Folding example for A-FreeBee: two high fold sum columns are detected via folding.}
  \label{fig:Folding_Asynchronous}
\end{figure}
A demodulation example of A-FreeBee is shown in Figure~\ref{fig:Folding_Asynchronous} for a RSSI vector of length 16. Two beacon streams are depicted in black boxes, and the gray boxes represent noise. 
Two columns with high fold sums are found by folding with $P=2\lambda=8$, where the distance between the two columns is $T-\Delta$. Noting that the distance would simply be $T$ before modulation, the amount of shift (i.e., the symbol) is therefore $\Delta$. Figure~\ref{fig:Real_life}  below demonstrates A-FreeBee demodulation in practice, where the conveyed symbol is $20\Delta$.

\begin{figure}[h]
  \centering
  \includegraphics[width=0.35\textwidth]{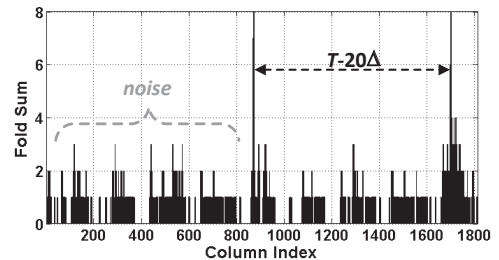}\vspace{-2mm}
  \caption{A-FreeBee demodulation ($T=113\Delta$)}
  \label{fig:Real_life}
  \vspace{-4mm}
\end{figure}

\pagebreak

The asynchronous version features several advantages over the basic FreeBee, including that it (i) requires no synchronization, (ii) is robust to clock drifts, and (iii) supports instantaneous communication without prior knowledge. We note that all these improvements come with a trade-off in data rate; A-FreeBee, compared to the basic FreeBee, requires collecting a higher number of beacons to form \emph{two} high fold sums instead of one. The relationship between the performances of FreeBee and A-FreeBee are analyzed both theoretically and empirically in later parts of the paper.

\subsection{Enhanced Feature \#2: Concurrency} \label{sec:freebee_mac}

Under multiple (A-)FreeBee senders, selecting the same or arbitrary intervals may lead the signals to tangle and cause demodulation errors. In this section, we address this issue to allow \emph{simultaneous transmissions} of an arbitrary number of (A-)FreeBee symbols such that each of them can be safely demodulated at the receiver.


\begin{figure}[h]
\vspace{-2mm}
  \centering
  \includegraphics[width=0.5\textwidth]{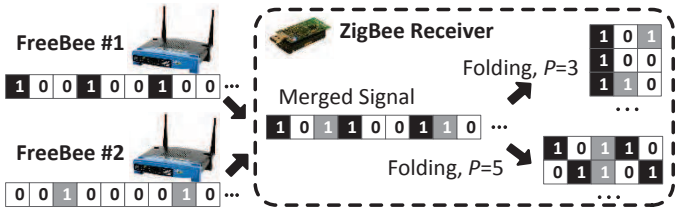}
  \caption{Demodulating interval-multiplexed FreeBee symbols.}
  \label{fig:Multiplexing}
  \vspace{-4mm}
\end{figure}

\subsubsection{Interval Multiplexing}
Recall that, according to the 802.11 standard, beacon intervals are defined in the unit of $\Delta$(=1.024$ms$). Let beacon intervals of $n$ neighboring APs be $x_1\Delta$, $x_2\Delta$, ..., $x_n\Delta$. Then, FreeBee allows simultaneous communication of $n$ APs if {$x_1$, $x_2$, ... $x_n$} are pair-wise co-primes.
We refer to this as\emph{ interval multiplexing}.


Figure~\ref{fig:Multiplexing} demonstrates a scenario of  interval multiplexing/demultiplexing,  where  two FreeBee senders with intervals $T_1=3\Delta$ (in black) and $T_2=5\Delta$ (in gray) introduce a vector of merged RSSI signals at a receiver, and this receiver utilizes interval de-multiplexing to demodulate.  Specifically, by folding with $P=3$ and 5 and looking for the column with the highest fold sum, the receiver can detect the position of the beacons as if the other signal does not exist. This is because 3 and 5 are co-primes, and no longer holds when they are not; for example, consider a sampled RSSI vector including beacons with intervals $T_1=2\Delta$ and $T_2=4\Delta$. When folded by $P=4$, both beacons will form high fold sums, causing demodulation error. We note that while the figure shows only two senders for clarity, this idea can be extended to $n$ senders as long as the intervals are pairwise co-prime. The rationale behind this scheme is given in the following proposition:



\begin{proposition}\label{prop:multiplexing}
For FreeBee signals with co-prime intervals, folding for one signal restricts the fold sum of the others to a maximum of 1.

\end{proposition}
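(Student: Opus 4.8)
The plan is to reduce the claim to an elementary statement about residues. First I would fix a discrete model of the quantized RSSI stream: index the samples $0,1,\dots,N-1$, and note that a single FreeBee sender whose interval spans $\lambda$ samples with phase $\phi$ marks exactly the positions $k$ with $k\equiv\phi\pmod{\lambda}$ as busy (modeling each beacon as one unit, as the boxes in Figures~\ref{fig:Folding} and~\ref{fig:Multiplexing} do; a physically wider beacon merely replaces this single residue by a short run of consecutive residues, which does not change anything). Folding by $P$ then means the fold sum of column $c\in\{0,\dots,P-1\}$ equals the number of busy positions in $\{c,c+P,c+2P,\dots\}\cap[0,N)$, and an upper bound on any fold sum is obtained by summing, over the senders, the count of that sender's own beacon positions lying in the column. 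So to prove the proposition it suffices to bound, for a fixed foreign sender $j$, how many of its beacon positions can fall in one column when we fold for sender $i$, i.e.\ take $P=\lambda_i$.

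The core step is number-theoretic. Sender $j$'s beacon positions are $\phi_j+m\lambda_j$ for $m\in\{0,1,2,\dots\}$, and beacon $m$ lands in column $(\phi_j+m\lambda_j)\bmod\lambda_i$. Because the intervals are co-prime, $\gcd(\lambda_i,\lambda_j)=1$, so $\lambda_j$ is invertible modulo $\lambda_i$ and the affine map $m\mapsto(\phi_j+m\lambda_j)\bmod\lambda_i$ runs through all $\lambda_i$ residues exactly once as $m$ runs over any $\lambda_i$ consecutive integers. Hence any $\lambda_i$ consecutive beacons of sender $j$ occupy $\lambda_i$ pairwise-distinct columns, so no column can receive two of them; equivalently, by the Chinese Remainder Theorem the combined pattern of senders $i$ and $j$ has fundamental period $\lambda_i\lambda_j$, within which sender $j$ contributes exactly $\lambda_i$ beacons spread over $\lambda_i$ distinct columns. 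Provided the observation window spans at most one such joint period -- the regime in which demodulation is carried out -- every fold sum therefore receives at most a $1$ from each foreign co-prime sender, which is the claim; for $n$ pairwise co-prime senders the identical argument applies to each foreign sender separately.

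I would also record the matching negative fact, since it is exactly what forces the co-primality hypothesis: if $g=\gcd(\lambda_i,\lambda_j)>1$, the map $m\mapsto(\phi_j+m\lambda_j)\bmod\lambda_i$ hits only $\lambda_i/g$ columns, so sender $j$'s beacons pile up $g$ at a time in those columns and build a fold sum of the same order as sender $i$'s own, producing the demodulation error illustrated around Figure~\ref{fig:Multiplexing}. The part I expect to need the most care is not the residue argument but the bookkeeping that makes the statement literally true: reconciling ``interval in units of $\Delta$ is co-prime'' with the sample-count $\lambda$ actually used in folding (these differ by the fixed samples-per-$\Delta$ constant, so one should fold at $\Delta$-granularity or divide that constant out first), and the finite-window caveat -- over a window longer than $\lambda_i\lambda_j$ samples the bound degrades to about $\lceil N/(\lambda_i\lambda_j)\rceil$, so ``maximum of $1$'' is really a per-fundamental-period statement, and what matters operationally is that the foreign fold sum stays a factor $\approx\lambda_i/N$ below the desired one. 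I would make the window assumption explicit in the statement (or phrase it per period) and keep the one-unit beacon model, with the wider-beacon case left as a short remark.
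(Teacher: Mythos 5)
Your argument is the same as the paper's: both hinge on the fact that, because the intervals are co-prime, a foreign sender's beacons recur in the same column only once every $\mathrm{lcm}(x_1,x_2)=x_1x_2$ samples, and the sampled RSSI vector is shorter than that, so each foreign sender contributes at most $1$ to any fold sum. Your residue/affine-map formulation just makes rigorous the paper's one-line LCM observation, and the finite-window caveat you flag explicitly is exactly the assumption the paper invokes when it says the vector length is ``much smaller than $x_1\times x_2$.''
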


\begin{proof}
Let $T_1=x_1\Delta$ and $T_2=x_2\Delta$ be two beacon intervals where $x_1$ and $x_2$ are co-primes.
When a sampled RSSI vector containing beacons of interval $T_2$ is fold by $P=x_1$, beacon repeats in a column every LCM (least common multiple) of $x_1$ and $x_2$, which is $x_1\times x_2$. Since the total length of the sampled RSSI vector is much smaller than $x_1\times x_2$, beacons with the interval of $x_1\Delta$ cannot be folded into the same column when folded by $x_2$ (and vice versa), yielding the maximum fold sum of 1.
\end{proof}


\noindent The proposition states that the cross-interference between FreeBee signals is effectively suppressed when intervals are co-prime, essentially granting orthogonality between signals for concurrent communication. We note that this holds for both basic FreeBee and A-FreeBee.

In practice, to avoid the overhead in computing for co-prime numbers, we allow APs to select among a set of pre-stored \emph{prime} numbers instead. Moreover, the same interval should not be chosen among neighboring APs, which can be easily achieved in \emph{listen before talk} fashion; they listen to each other to acquire the beacon interval information (i.e., $x$ in $T=x\Delta$) that, according to the 802.11 standard, is recorded within beacons. After storing a set of intervals used by the neighboring APs, an unoccupied prime number is chosen as its interval. Conversely, coordination via wired connection (i.e., WLAN or the Internet) may be preferred, which avoids the hidden terminal problem.

\subsubsection{Implicit Addressing Feature of FreeBee}
This section discusses the unique addressing scheme used in FreeBee. As a reminder, demodulating each interval-multiplexed FreeBee signals remains the same as the case for a single signal; folding with $P$ yields the corresponding FreeBee signal with $P$.

\begin{figure}[h]
  \centering
  \includegraphics[width=0.48\textwidth]{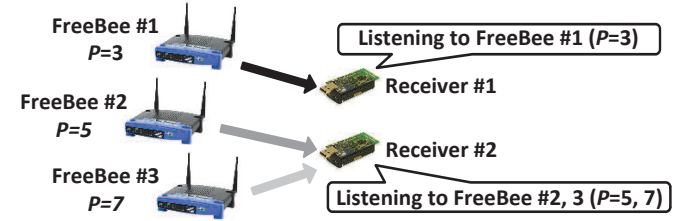}
  \caption{Implicit addressing via interval multiplexing}
  \label{fig:Implicit_Addressing}
\end{figure}

\noindent$\bullet$ \textbf{For stationary deployment}: As shown in Figure~\ref{fig:Implicit_Addressing}, each interval $P$ is allocated to one freeBee sender, implicitly addressing that sender. Hence a receiver may select a subset of $P$'s that corresponds to the sender(s) of interest.


\noindent$\bullet$ \textbf{For mobile deployment}: In mobile scenarios, mobile devices are not aware of $P$s used by nearby FreeBee enabled devices. We need to fold for all $P$s in the prime set to receive from every sender. This is not as computationally heavy as it may sound, as the number of primes, by the prime number theorem~\cite{PrimeTheory}, is limited to $\frac{x_{max}}{ln(x_{max})}-\frac{x_{min}}{ln(x_{min})}$ when $x_{max}$ and $x_{min}$ are the maximum and minimum values in the prime set. For example, there are 20 primes in the interval range of $x_{min}=53$ and $x_{max}=149$.

From a practical point of view, we emphasize that implicit addressing is a unique and effective feature of our design: as each sender is required to select different beacon interval, symbols demodulated with the same $P$ are ensured to be from the same sender. This allows safely constructing a long symbol by appending received symbols. However, this is not the case for all other cross-technology techniques. In Esense~\cite{ChebroluD09} and HoWiES~\cite{ZhangL13a}, a sender ID has to be embedded along with information in order to concatenate separate symbols correctly, leading to a large overhead in such a low-rate communication.

\subsection{Summary of Unique Design Benefits} \label{sec:free_channel}
Here we briefly summarize significant and unique benefits of FreeBee. Free side-channel design (i.e., not incurring additional traffic overhead) guarantees \textbf{(i) transparency} to avoid disrupting legacy networks, maintains \textbf{(ii) spectrum efficiency} by preventing competing for the spectrum, and effectively \textbf{(iii) supports mobile and duty cycled receivers}. Duty cycling is a critical technique to enable long-term operations~\cite{cao20122, LiLL14, INFOCOM15WRx} for battery-powered devices, where radios turned off for the majority of the time. FreeBee messages are continuously transferred over the air without overwhelming the channel, thereby safely reaching mobile and/or duty cycled receivers without the need for complex rendezvous techniques~\cite{Dutta08, HuangYZKH13}. Moreover, interval multiplexing enables \textbf{(iv) concurrent many-to-many communication} to further enhance the practicality under today's crowded wireless devices. Finally, FreeBee causes \textbf{(v) minimal computational overhead} to both sender (perturbing beacon transmission timing) and receiver (low-rate RSSI sampling of 7.8KHz and fold sum), easily affordable even to low-end embedded devices as demonstrated later in the paper. Above advantages are extensively evaluated, both analytically and experimentally, in Sections~\ref{sec:analysis} and~\ref{sec:evaluation}.

\section{Addressing Practical Issues} \label{sec:DiscussionChallenges}

This section discusses practical issues and their impact, followed by the solutions adopted in our design. 

\subsection{Reliability under Channel access delay}
Although beacons are prioritized over data packets and hence queueing delays are negligible~\cite{IEEE80211}, they \emph{do} suffer from channel access delays according to CSMA. This is in fact a challenge uniquely imposed on our design, which is different from traditional the PPM environment where all pulses are transmitted at their exact times. Upon long delays, a beacon may fail to contribute to folding. This is precisely why  beacon repetitions are needed (e.g., four  beacons in Figure~\ref{fig:Folding}) for statistical performance guarantee.  Our empirical study in  Section~\ref{sec:SER_Experiment} suggests that 5 beacons yield an error $<$ 1\%.


\subsection{Robustness to Noise}
Any non-beacon signal occupying the spectrum serves as noise and is a potential source of error. That is, as frequent 1's due to noise fill up the sampled RSSI, there is an increased chance of a large fold sum formed elsewhere to the beacon position, thus inducing demodulation failure. In other words, the performance of our design is enhanced by reducing the noise. This is simply done by taking only the first two RSSI samples for \emph{any} packet including beacons and discarding (i.e., set to 0) the rest. The reason behind this approach is two-fold: (i) As data packets tend to be much longer than beacons, this reduces noise to 1/6 on average in our experiments. (ii) Our empirical analysis indicates that the channel access delays of beacons are mostly ($>$90\%) less than 256us, where a similar result was reported in a recent study~\cite{HaoZXM11}. Noting that the duration of a RSSI sample is 128us, this suggests the first two RSSI samples (256us) of beacons maintain a high chance of overlapping (i.e., contributing to fold sum) upon folding.

\begin{figure}[h]
  \centering
  \includegraphics[width=0.46\textwidth]{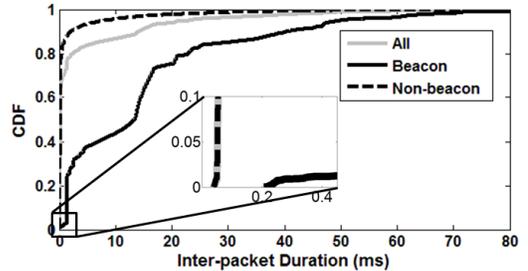}
  \caption{Distribution of inter-packet duration for 5,000 beacons and 40,000 non-beacon packets in university building with 50+ APs. More than 99\% of the beacons have more than 256us channel vacancy before their transmission.}
  \label{fig:Inter_Packet_Duration_CDF}
  \vspace{-2mm}
\end{figure}

We note that the aforementioned technique of capturing the first two RSSI is feasible only when the \emph{beginning} of beacon can be appropriately detected. Noting that WiFi inter-packet duration can be as small as 10us (SIFS in 802.11b/g/n), can RSSI with the coarse granularity of 128us successfully tell when the beacon starts, via transition from 0 to 1? To answer this question we performed an experiment under a dense WiFi environment with more than 50 APs deployed. The result depicted in  Figure~\ref{fig:Inter_Packet_Duration_CDF} demonstrates that inter-packet durations preceding beacons are predominantly ($>$ 99\%) larger than 256us. This is because a beacon is a single packet, unlike data traffic that are often large and cause multiples of bursty packets. Therefore the beginning of beacons can be safely captured via RSSI transition (0 $\rightarrow$ 1), validating the efficacy of the technique.

\section{Analytics} \label{sec:analysis}
This section provides a theoretical analysis of the performance of FreeBee and A-FreeBee under different settings. 

\subsection{SER vs. Sampling Duration} \label{sec:symbol_error_rate}



%
%

\begin{figure}[h]
\centering
\subfigure[\small{FreeBee}]{\label{fig:SERvsSampleDur_Sync}
\includegraphics[width=0.22\textwidth, height=0.15\textheight]{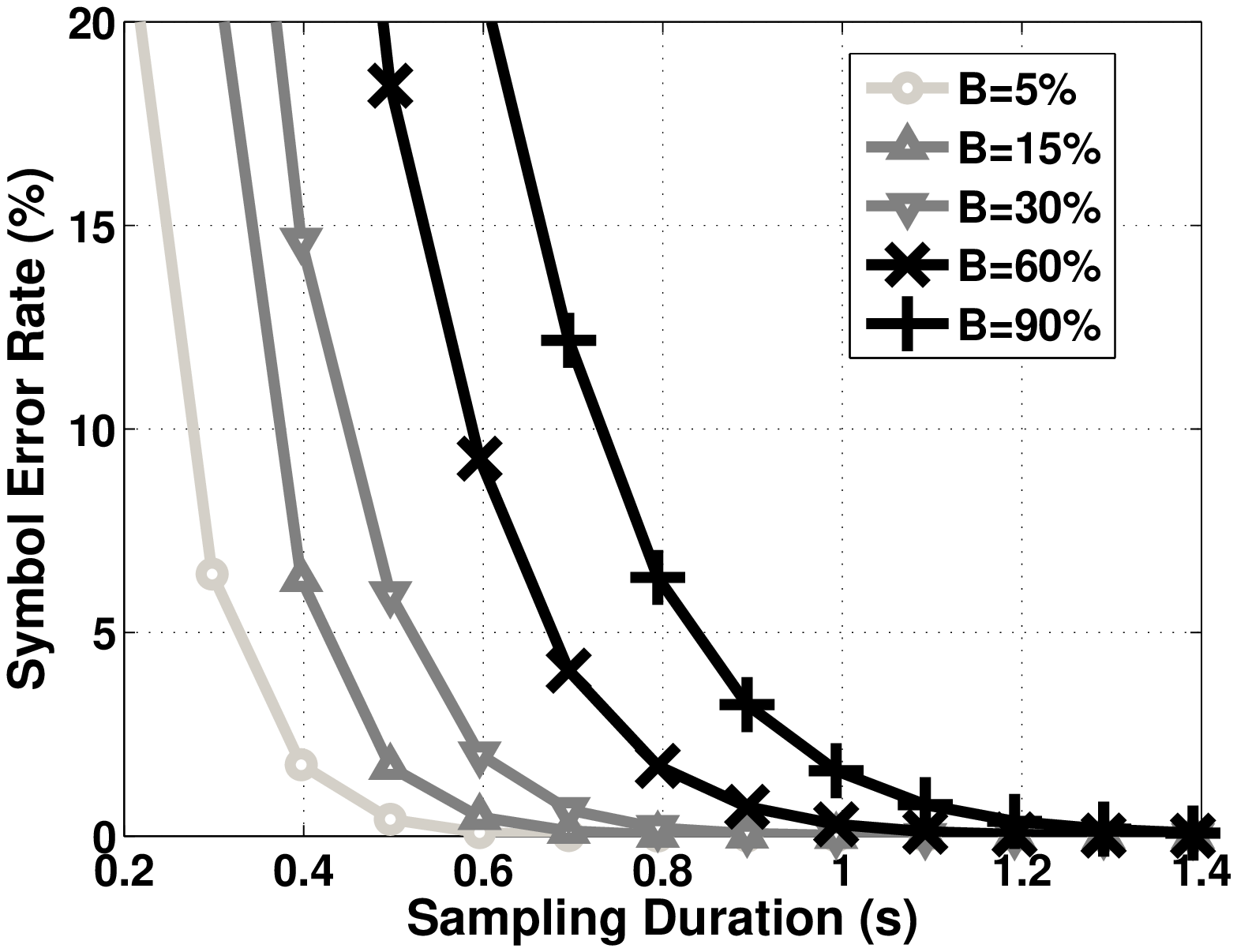}}
\subfigure[\small{A-FreeBee}]{\label{fig:SERvsSampleDur_Async}
\includegraphics[width=0.22\textwidth, height=0.15\textheight]{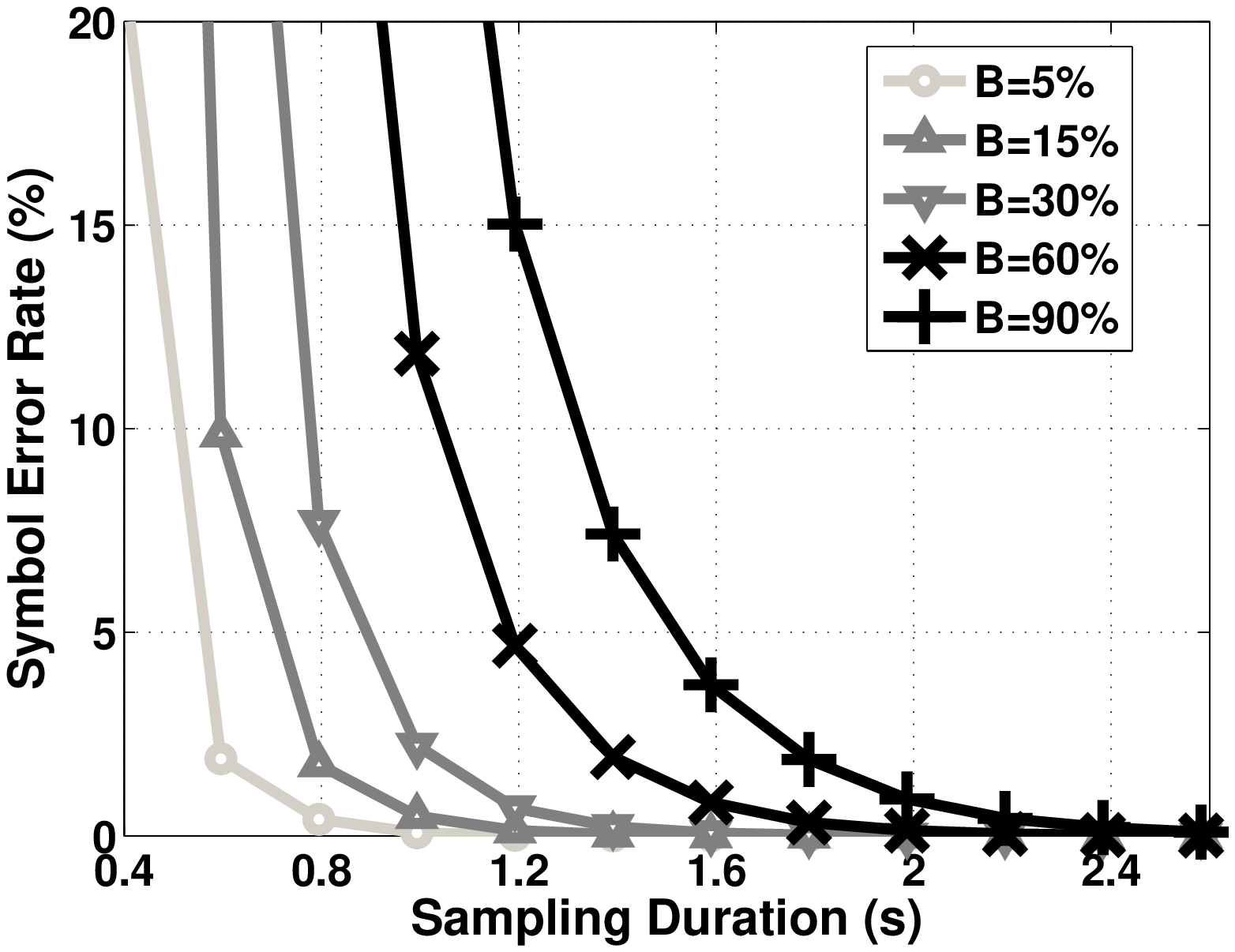}}
\vspace{-0.1in}
\caption{SER vs. sampling duration ($T=97\Delta$).}
\label{fig:SERvsSampleDur}
\vspace{-0.1in}
\end{figure}

As noted earlier, the symbol error occurs upon demodulation failure, essentially when the fold sum of noise is higher than that of the beacons. For brevity, we omit a detailed SER (Symbol Error Rate) derivation which can be found in our work of~\cite{songminFREEBEE}, and move directly to the results to demonstrate the impact of three  system parameters: (i) $T $, the beacon interval. (ii)  $\rho$, the number of beacon repetitions for statistical demodulation; and (iii) Sampling duration,  the sampling time to obtain a symbol, which is  $\rho\times T$ for FreeBee and $\rho\times 2T$  for A-FreeBee.

Figures~\ref{fig:SERvsSampleDur_Sync} and~\ref{fig:SERvsSampleDur_Async} show Symbol Error Rate (SER) when default beacon interval is set as $T=97\Delta=99.3ms$.  These figures convey three ideas: (i) longer sampling (i.e., higher $\rho$) lowers SER, as more beacons are utilized to fight against noise; (ii) for a given duration, FreeBee achieves a lower SER than A-FreeBee; and (iii) higher channel occupancy rate, denoted by $B$, indicates more noise, thus higher SER. The figures have $B$ up to 90\% for completeness of analysis, where $B\leq 30\%$ was observed in our experiment (under the threshold of -75dBm) in a university building with 50+ APs. Similar observations were reported in a recent study from a large-scale open WiFi trace~\cite{Crawdad, ZhouXXSM10}. The figures suggest analytically that FreeBee and A-FreeBee achieve SER$<$1\% for durations of 0.7$s$ and 1.2$s$ under such condition.  We note this analytical result matches well with results from empirical experiments shown in Section~\ref{sec:SER_Experiment}.

\begin{figure}[h]
\vspace{1.5mm}
  \centering
  \includegraphics[width=0.42\textwidth]{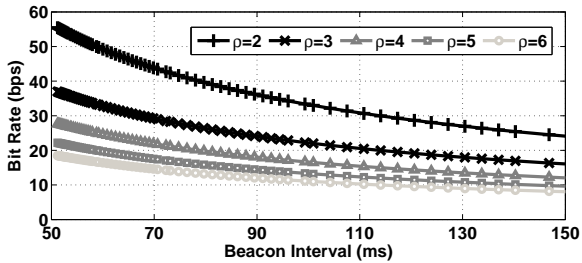}
  \vspace{-2mm}
  \caption{Bit rate vs. $T$ for different $\rho$.}
  \label{fig:BitRate}
 \vspace{-4mm}
\end{figure}
\vspace{-0.05in}

\subsection{Transmission Rate vs. Beacon Interval} \label{sec:transmission_rate}

Beacon interval, $T$, is another factor that affects the performance. The impact of beacon interval $T$ can be observed with bit rate $R$. Intuitively, enlarging the beacon interval has two effects: (i) it offers more space for shift, or equivalently, yields more bits per symbol; and (ii) it requires more time to reach the same $\rho$. The bit rate for FreeBee can be computed as below:

\vspace{-0.1in}
\begin{equation}
 \label{eq:bit_rate}
 R=\frac{log_2T/\Delta}{T\times \rho}\:bps
\end{equation}
\vspace{-0.1in}

\noindent
Noting that $\Delta$ (1.024ms in WiFi) defines the granularity of shift,
the numerator in the Equ.~\ref{eq:bit_rate} implies $bit\:per\:symbol$. Figure~\ref{fig:BitRate} shows the impact of beacon interval $T$ on $R$ in different scenarios for the range of practical intervals. In A-FreeBee, the rate is cut in half as it takes double sample duration (i.e., $\rho\times 2T$) to convey a same symbol. It is important to note that the rate given here is per sender, without bandwidth consumption (i.e., without incurring extra traffic). Due to interval multiplexing, the aggregated throughput linearly increases according to the number of senders. Furthermore, boosting the throughput of a single sender by adopting additional beacons is also a viable option. Performances under such cases are shown via experimental evaluations in the next section.

\section{Performance Evaluation} \label{sec:evaluation}

We show the generality of (A-)FreeBee by evaluations performed across four platforms operating on three different wireless standards: WiFi, ZigBee, and Bluetooth.

\begin{figure}[h]
 \centering
\begin{minipage}[t][0.33\textheight]{0.47\textwidth}
 \centering
 \subfigure[\small{WiFi(WARP)}]
 {\label{fig:Testbed_WARP}
  \includegraphics[width=0.47\textwidth, height=0.13\textheight]{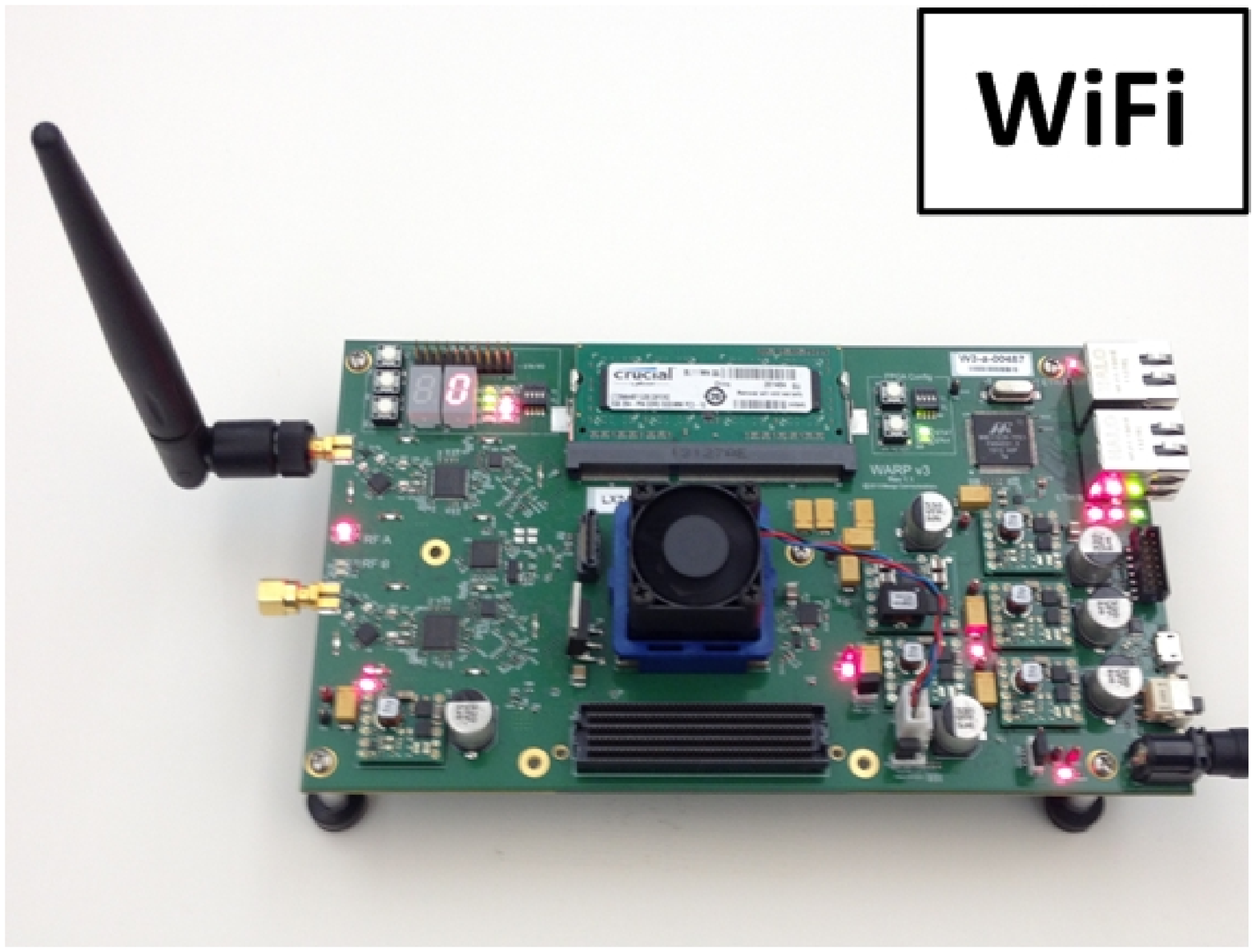}}
 \subfigure[\small{WiFi(laptop)}]
 {\label{fig:Testbed_Laptop}
 	\includegraphics[width=0.47\textwidth, height=0.13\textheight]{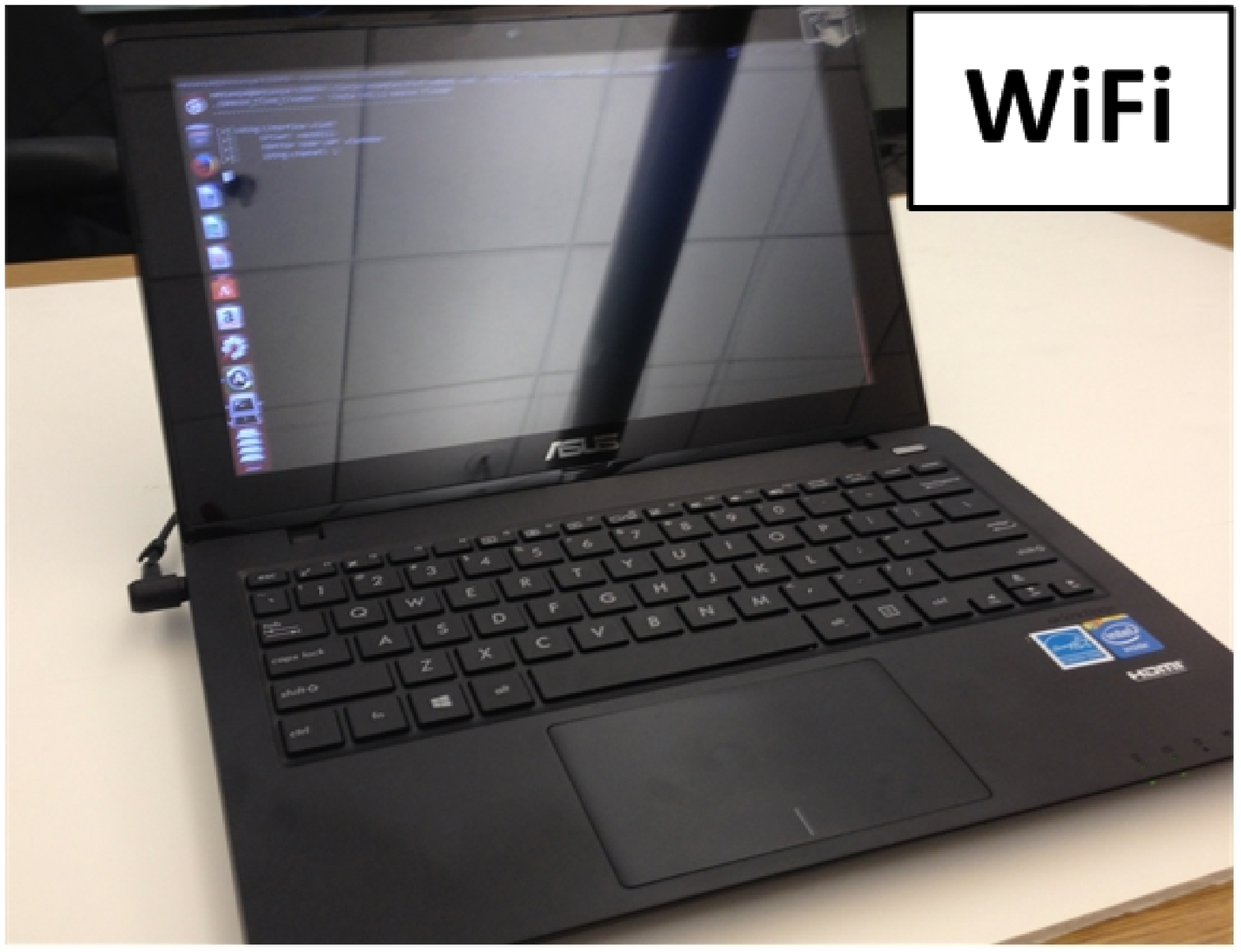}}
 \subfigure[\small{ZigBee}]
 {\label{fig:Testbed_MICAz}
		\includegraphics[width=0.47\textwidth, height=0.13\textheight]{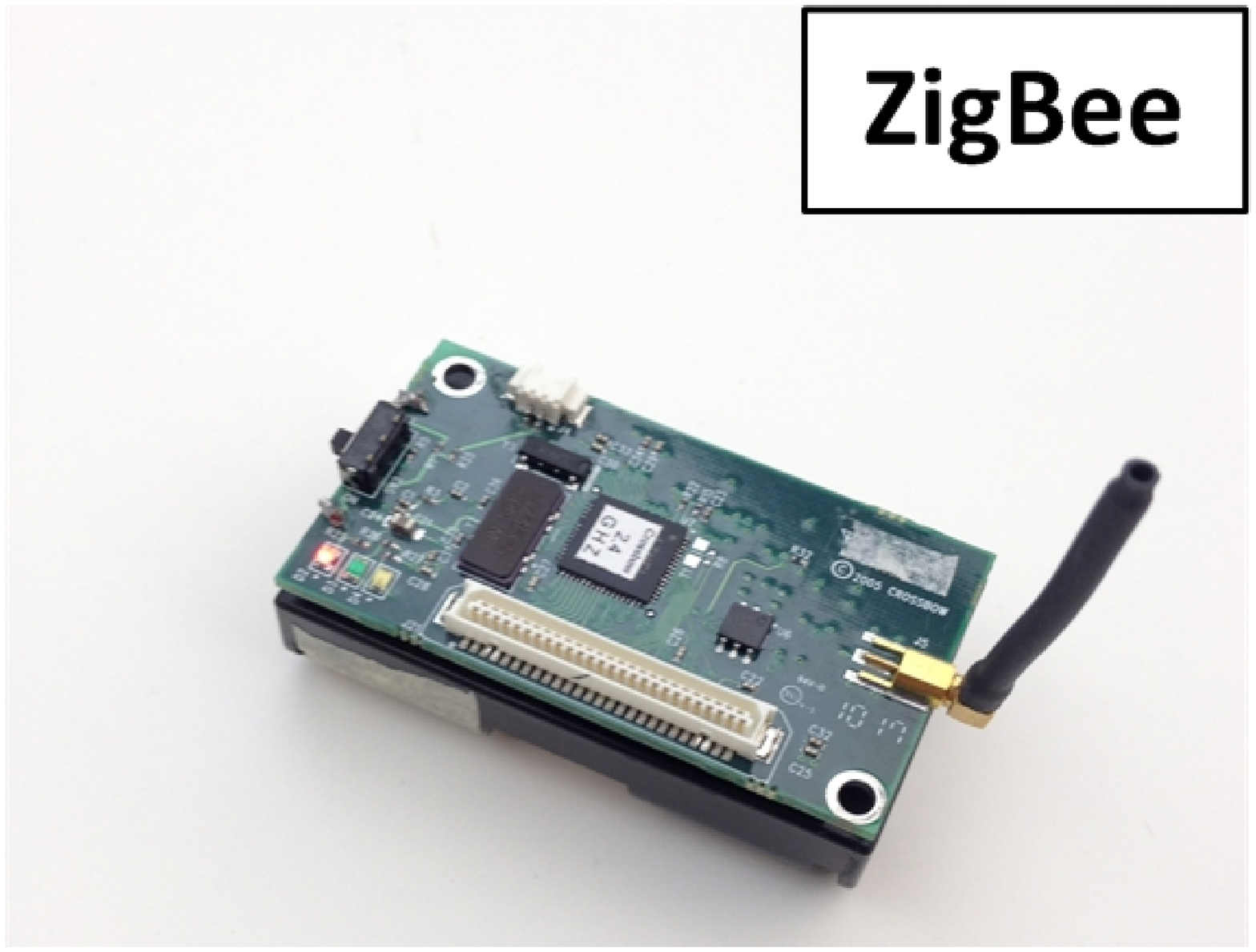}}
 \subfigure[\small{Bluetooth}]
 {\label{fig:Testbed_Bluetooth}
  \includegraphics[width=0.47\textwidth, height=0.13\textheight]{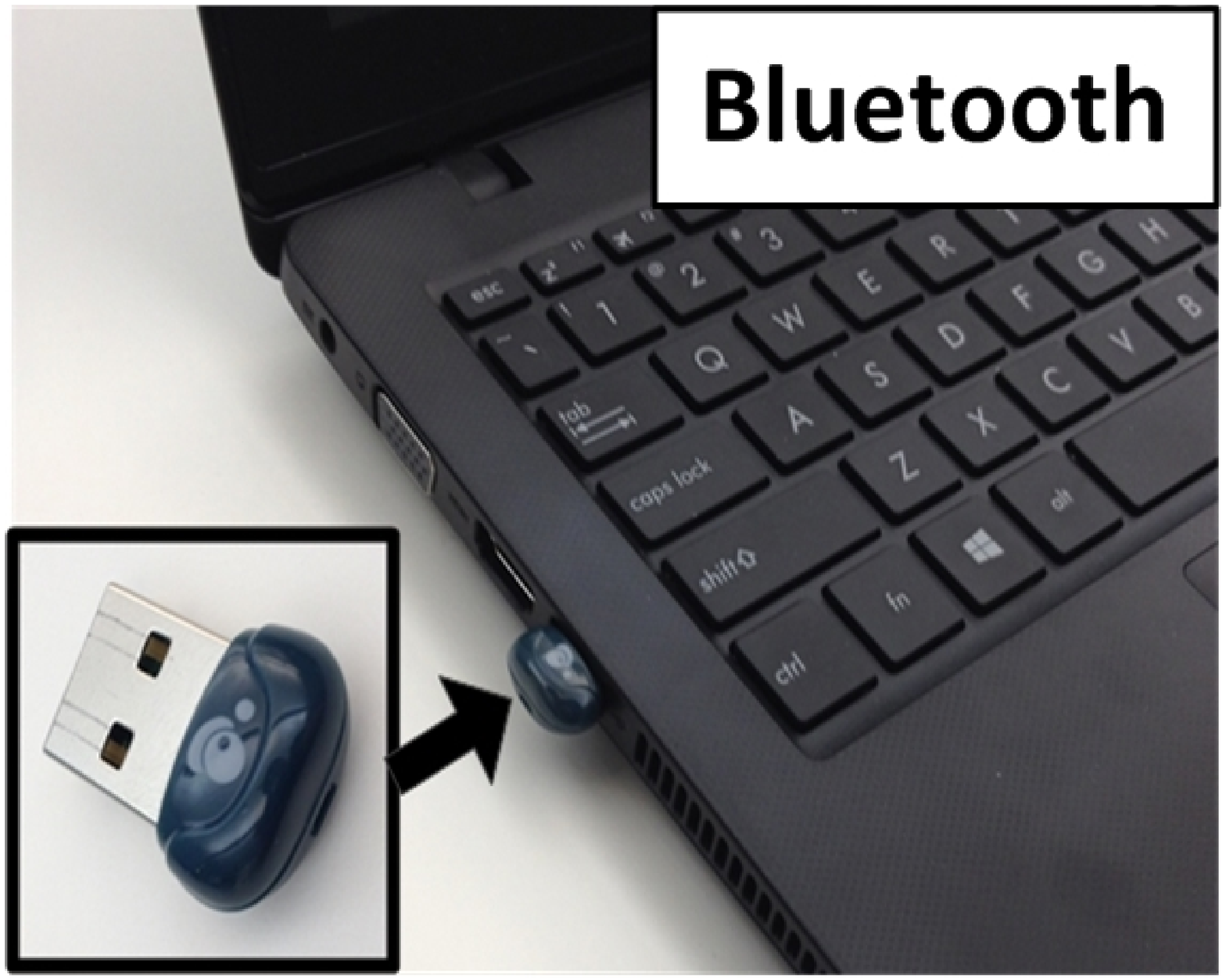}}		
 \caption{Four evaluation platforms on three different wireless technologies -- WiFi, ZigBee, and Bluetooth.}
 \label{fig:Testbed_platforms}
\end{minipage}
\end{figure}

\begin{table}[h]
\small{
\caption{Experiment Settings}
\label{table:Experiment_Scenario}
\begin{tabular}{|c||c|c|c|c|}
\hline
\small{\textbf{Communication}} & \small{\textbf{Tx}} & \small{\textbf{Tx}} & \small{\textbf{Rx}} &   \\
\small{\textbf{Direction}} & \small{\textbf{Ch.}} & \small{\textbf{Power}} & \small{\textbf{Ch.}} & \small{\textbf{Dist.}} \\ \hline \hline
\small{\textbf{WiFi $\rightarrow$ ZigBee}} & 1,4 & 13 dBm & 11-15 & 8m \\ \hline
\small{\textbf{ZigBee $\rightarrow$ WiFi}} & 11-14 & 0 dBm & 1 & 1.5m \\ \hline
\small{\textbf{Bluetooth $\rightarrow$ WiFi}} & 37-39 & 4 dBm & 4 & 3m \\ \hline
\small{\textbf{Bluetooth $\rightarrow$ ZigBee}} & 37-39 & 4 dBm & 15 & 3m \\ \hline
\end{tabular}
}
\end{table}
\vspace{-0.1in}

\begin{figure*}[t]
\centering
\begin{minipage}[t][0.21\textheight]{1\textwidth}
\centering
\subfigure[\small{FreeBee}]{\label{fig:Eval_SER_FreeBee} \includegraphics[width=0.3\textwidth, height=0.15\textheight]{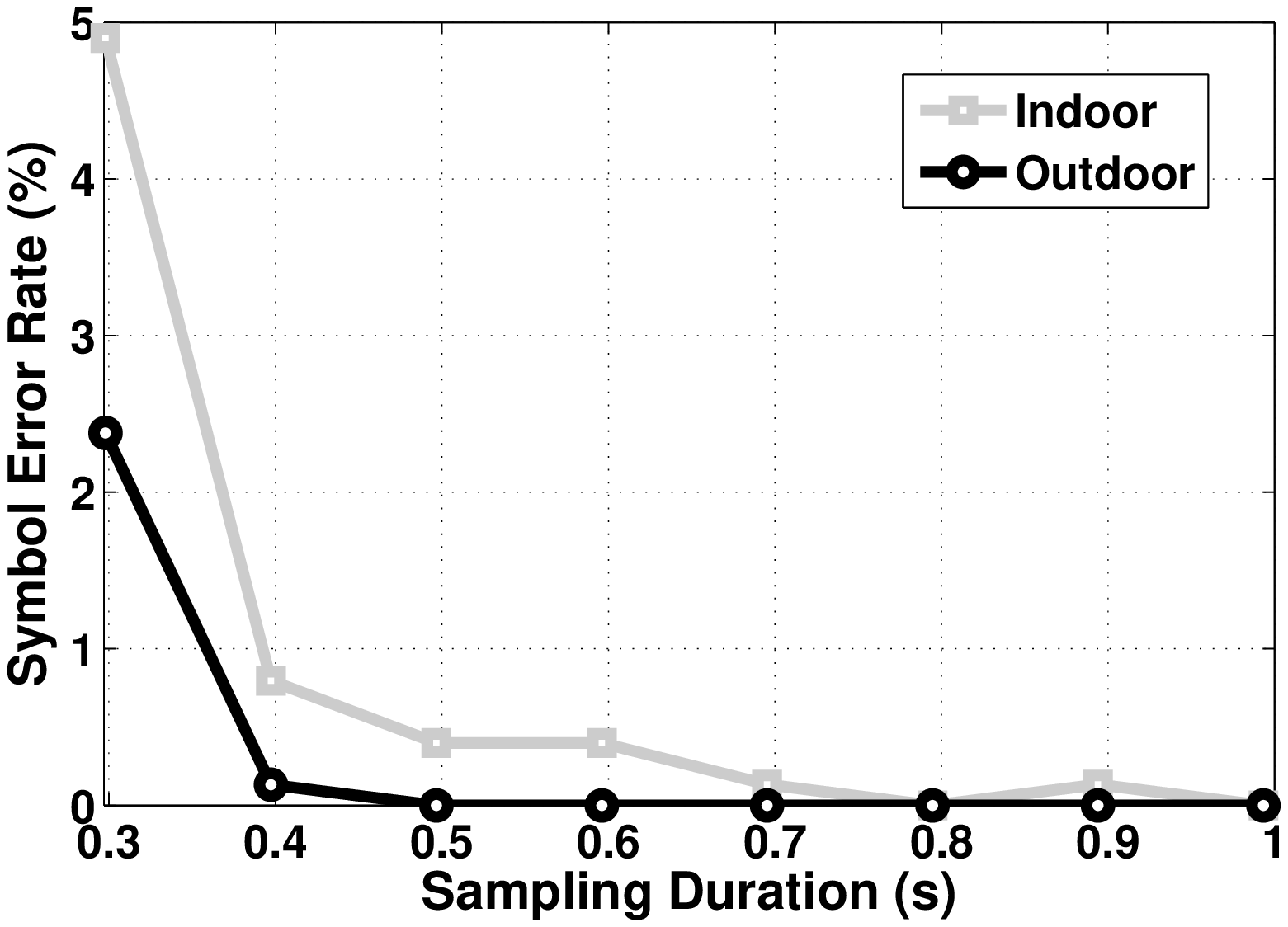}}
\hspace{0.5mm}
\subfigure[\small{A-FreeBee}]{\label{fig:Eval_SER_AFreeBee} \includegraphics[width=0.3\textwidth, height=0.15\textheight]{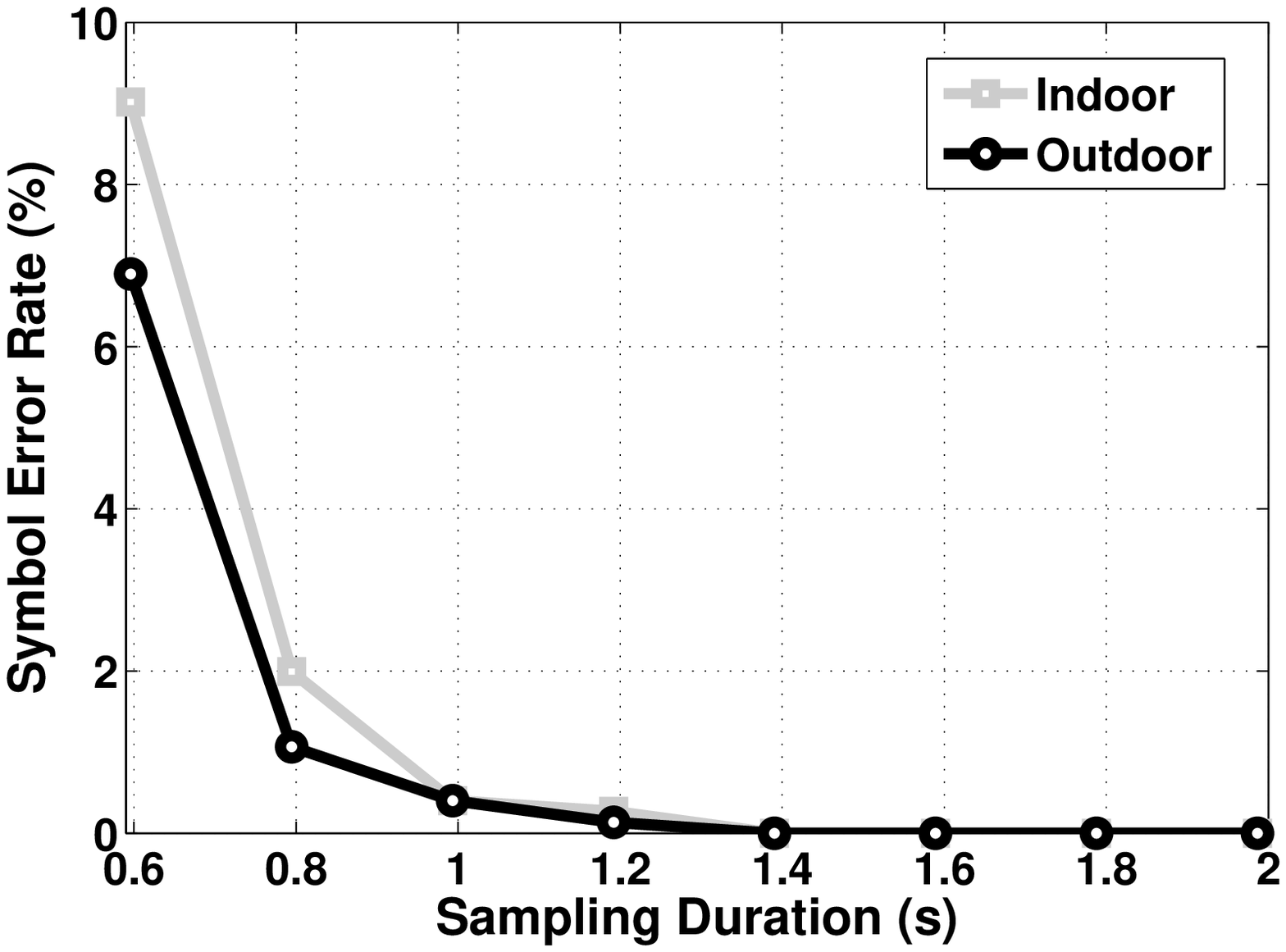}}
\hspace{0.5mm}
\subfigure[\small{Constellation diagram and distribution}]{\label{fig:Eval_Constellation} \includegraphics[width=0.33\textwidth, height=0.15\textheight]{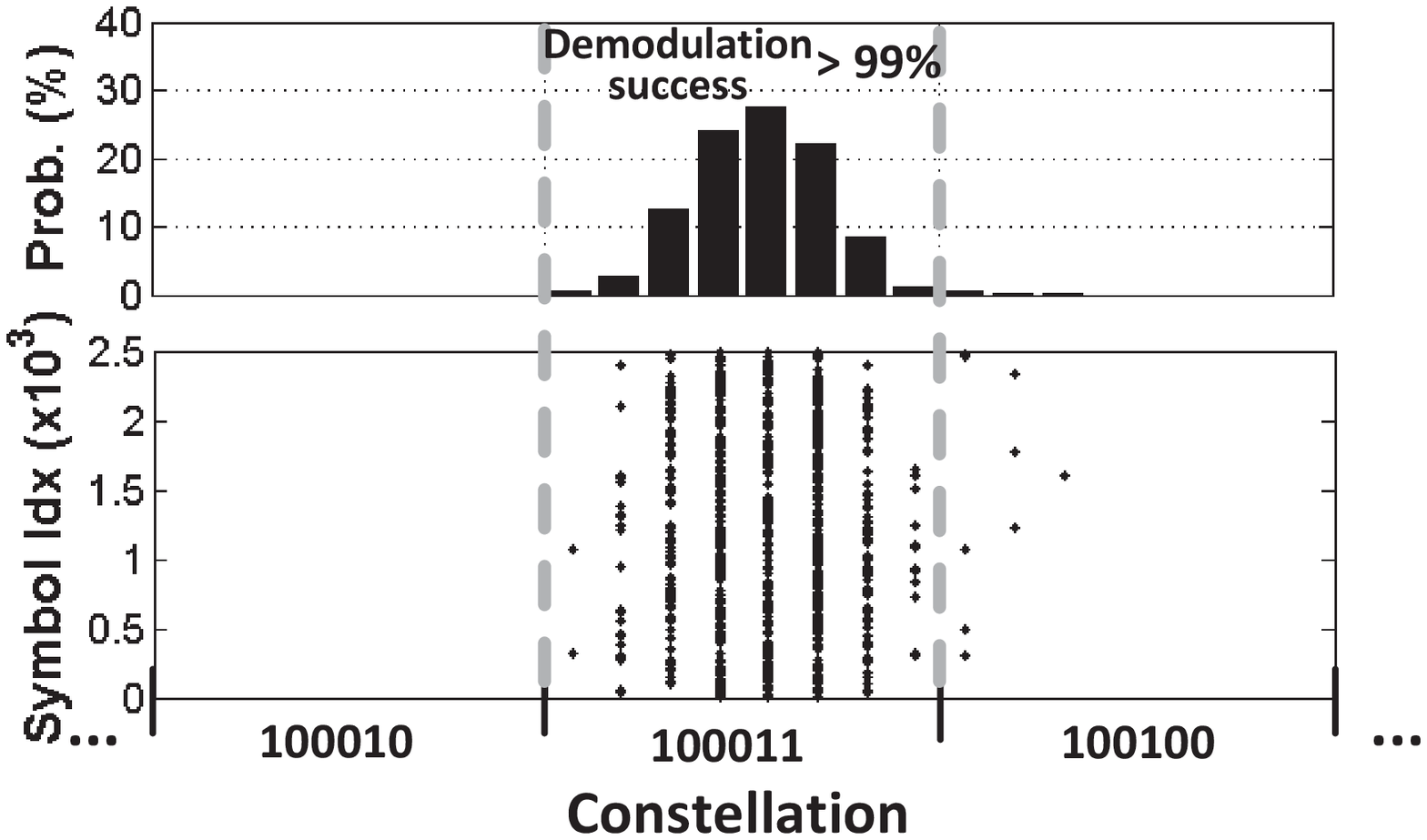}}
\caption{Indoor and outdoor performances of (a)FreeBee and (b)A-FreeBee when $T=97\Delta=99.3ms$. Due to less noise outdoors, both FreeBee and A-FreeBee reach SER$<$1\% at $\rho=4$, whereas indoor environment required $\rho=5$. (c) shows where the received symbols are positioned when `100011' was sent 2,500 times.}
\label{fig:SER_evaluation}
\end{minipage}
\end{figure*}

\subsection{Experiment Settings}
The experiment parameters are specified in Table~\ref{table:Experiment_Scenario}, where the detailed settings are as follows.

\vspace{1mm}
\noindent$\bullet$\textbf{ Design generality:}
As a generic cross-technology communication framework, our design mechanisms including mo-dulation, demodulation, and interval multiplexing commonly apply to different underlying technologies. This is possible because, according to the standards, (i) beacons are adopted in all three technologies of  WiFi, ZigBee, and Bluetooth, (ii) they commonly allow modification of beacon timings, (iii) they reside on overlapping frequencies in the 2.4GHz band and finally, (iv) the light-weight design makes our design feasible even to low-end devices, as demonstrated later in the section.

\vspace{1mm}
\noindent$\bullet$\textbf{ WiFi:} Figures~\ref{fig:Testbed_WARP} and~\ref{fig:Testbed_Laptop} show the two WiFi platforms on which our design is implemented: WARP~\cite{WARPv3} and laptops. The former is a wireless research platform fully integrated with WiFi PHY/MAC layers. As a FPGA-based system allowing real-time operation, the evaluation on WARP enables precise exploration into FreeBee performance.
Further implementation on six different laptops with various WiFi NICs from Qualcomm, Realtek, and Intel avoids hardware bias.

Evaluations via laptops utilize \texttt{Lorcon2} packet injection library~\cite{Lorcon2} to schedule beacons according to FreeBee, which is a reasonable approach since laptops/PCs running software AP are frequently used in practice~\cite{hostapd}. In Table~\ref{table:Experiment_Scenario}, we use WiFi channel 1 (unless otherwise specified) for communication with ZigBee, one of the three most commonly used channels (i.e., 1, 6, and 11), granting generality to our result. We then tune to channel 4, which overlaps with a Bluetooth advertising channel (i.e., 38), for communication with Bluetooth.

\vspace{0.1in}
\noindent$\bullet$\textbf{ ZigBee:} We use 30 ZigBee-compliant MICAz nodes (Figure~\ref{fig:Testbed_MICAz}) for our experiments.
When operating as a receiver, a MICAz node captures RSSI values (at 7.8KHz sampling rate by default) and records them within its 512KB on-board flash. The values are either processed (i.e., demodulated) within the node or flushed to a PC for analysis, depending on experiments. We use channels 11-15, overlapping with WiFi channels 1 and 4, and a Bluetooth advertising channel of 38.

\vspace{0.1in}
\noindent$\bullet$\textbf{ Bluetooth:} Our design is implemented on IOGEAR Bluetooth LE USB adapter, a cheap ($\sim$12 USD) off-the-shelf product, shown in Figure~\ref{fig:Testbed_Bluetooth}. On this device, we utilize \texttt{AltBeacon}~\cite{AltBeacon} library running under Linux's BlueZ driver for FreeBee embedding. Specifically, connectable directed advertising was used to generate FreeBee-enabled beacons
on all three advertising channels of 37-39, which complies to the standard on Bluetooth beaconing.


\subsection{Symbol Error Rate}\label{sec:SER_Experiment}
Here we present the reliability of our design in practice by evaluating SER under both indoor and outdoor environments in a residential area. This experiment is based on WiFi(WARP) to ZigBee communication, where more than 2,500 symbols are sent and demodulated for SER computation. The results for FreeBee and A-FreeBee are depicted in Figures~\ref{fig:Eval_SER_FreeBee} and~\ref{fig:Eval_SER_AFreeBee}. Both designs reach $SER \leq 0.5\%$ when $\rho=5$ (i.e., $0.5s$ and $1s$ for FreeBee and A-FreeBee), regardless of the environment. Furthermore, both designs perform better in outdoor environments, due to lower channel occupancy, $B$.

The lower figure in~\ref{fig:Eval_Constellation} illustrates the constellation, along with the demodulated positions of received symbols, when the 6bit symbol of `100011' is repeatedly sent 2,500 times. Demodulation is successful when a dot resides inside the region marked by gray dotted lines in the constellation. The upper figure in~\ref{fig:Eval_Constellation} shows the distribution of the dots, in which $>$99\% are successfully demodulated.

FreeBee essentially trades-offs speed with robustness, by controlling $\rho$; The increase of $\rho$ effectively improves the chance of detecting the peak at the position of the beacon and thus lowers symbol error rate (i.e., better robustness), at the cost of increased time for symbol delivery (i.e., slower speed). To better demonstrate this feature, we conduct an experiment study under peak interference of $B=30\%$ in a university building . In our study, the degree of interference is occasionally observed during peak times of the day, in a university building with 50+ WiFi APs. Under this setting, our experiments demonstrate SER of 3.1\%, 1.8\%, and below 1\% when $\rho=$ 13, 14, and 15, respectively. This result, compared to the performance under usual traffic (Figure~\ref{fig:SER_evaluation}(a)), clearly delivers the idea that FreeBee can be effectively achieved by increasing $\rho$ even under extreme interference, where the throughput may decrease to as lows as 30\%.

\begin{figure}[h]
  \centering
  \includegraphics[width=0.47\textwidth]{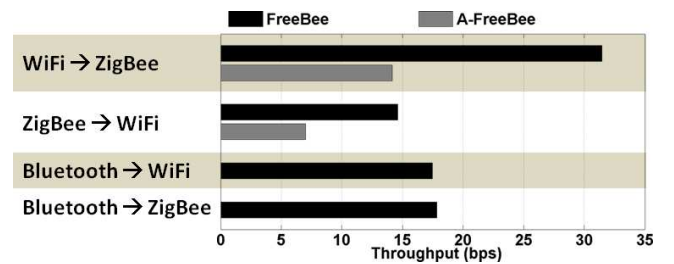}
  \vspace{-2mm}
  \caption{Per-sender throughputs achievable without incurring any extra traffic, evaluated on four different communication channels}
  \label{fig:Evaluation_Throughput_FreeChannel}
  \vspace{-3mm}
\end{figure}

\subsection{Per-sender Throughput}\label{sec:PerSenderEval}
In this section we demonstrate the data rate achievable per sender, under free side-channel as well as when the channel is fully utilized.

\noindent$\bullet$\textbf{ Free side-channel throughput:} Figure~\ref{fig:Evaluation_Throughput_FreeChannel} illustrates per-sender data rate without introducing additional traffic (i.e., free side-channel). The experiment was conducted in a residential area with 20+ APs in proximity. Beacon intervals are set as 99.3, 76.8, and 78.75ms for WiFi, ZigBee, and Bluetooth, respectively, where the rate may easily be enhanced with shorter beacon intervals. The figure demonstrates two ideas: (i) The performance of A-FreeBee is slightly less than half of that of FreeBee, due to doubled sampling duration in A-FreeBee. Longer sampling duration increases the chance of larger fold sum of noise, hence yields higher SER in A-Freebee compared to FreeBee. This agrees with our theoretical analysis in Section~\ref{sec:symbol_error_rate}. (ii) Among different communication directions, WiFi to ZigBee exhibits the fastest rate of 31.5bps for FreeBee. The rate drops to 14.6bps for ZigBee to WiFi as the ZigBee standard enforces large unit shift (i.e., $\Delta$=15.36ms), reducing the amount of information embedded in a symbol. While the Bluetooth standard defines a fine-grained beacon shift unit ($\Delta$ = 0.625ms), the random backoff ranging up to 10ms affects the performances of Bluetooth to WiFi and ZigBee communications where they show 17.5 and 17.8bps, respectively. While disabling the backoff functionality would increase the throughput significantly, the case is not considered in this paper as it requires modification to the standard (lacks compatibility).




\noindent$\bullet$\textbf{ Upper bound throughput under ideal conditions:} We demonstrate the effectiveness of our design by comparing the maximum throughput of our design to that of Esense~\cite{ChebroluD09}, a state-of-art cross-technology communication scheme. Esense, as a WiFi to ZigBee communication technique, conveys data by modulating WiFi packet durations, where its ideal maximum throughput is reported under the ideal condition of interference-free channel that is fully utilized by a single sender. Hence we adopt this setting in evaluating FreeBee. Evaluation parameters also follow the values proposed in Esense; That is, RSSI sampling rate of 32$KHz$ (= sampling interval of 30.5$us$), inter-frame duration of $90us$, and the maximum WiFi transmission rate of 54Mbps (802.11g).

To obtain the ideal FreeBee performance, we first note that beacons need not be repeatedly transmitted under the interference-free channel.
That is, under zero-interference ideal environment, any energy detected in the channel (via RSSI) is ensured to be a beacon, where the symbol is simply decided by the timing of the RSSI sample. In other words, a single beacon frame conveys a symbol where SER=0. When the maximum shift is $x\Delta$, a symbol (i.e., a beacon) embeds $log_2(x+1)bits$, where the time it consumes consists of inter-frame duration, beacon transmission time, and the amount of shift, where the unit of shift may be as small as the sampling interval (i.e., $\Delta = 30.5us$). When $x=4$, FreeBee yields the throughput of 10.2Kbps given the beacon length of 100Bytes. Esense, according to its researchers, achieves the throughput of 5.13Kbps under the same setting. This is because Esense requires to use long-length packets (with long air-time) up to 1,500Bytes to enable measurement of WiFi packet durations via a low-end ZigBee node. Meanwhile, FreeBee utilizes short beacons to offer higher channel efficiency.




\begin{figure}[h]
  \centering
  \includegraphics[width=0.48\textwidth]{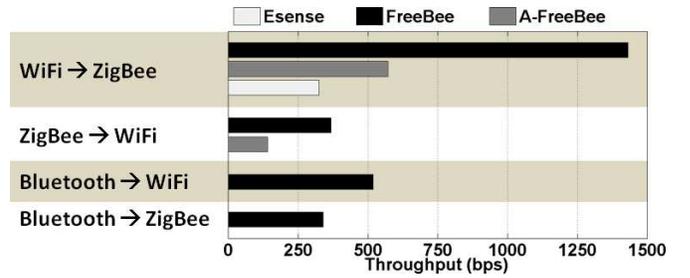}
  \caption{Aggregated throughput via interval multiplexing across multiple senders and/or within a sender. Compared to the state-of-the-art,
Esense, FreeBee and A-FreeBee shows 4.3$\times$ and 1.8$\times$ performance, respectively.}
  \label{fig:Evaluation_Throughput_DominatedChannel}
\vspace{-3mm}
\end{figure}

\subsection{Aggregated Throughput}
This section investigates throughputs achievable via interval multiplexing, under different communication scenarios. Through interval multiplexing our design supports: (i) boosted aggregated throughput via concurrent transmission from multiple senders and (ii) adoption of additional beacons in a single sender to increase the throughput of that sender. We note that (i) and (ii) can be applied simultaneously; that is, multiple senders transmit concurrently while each sender freely adopts additional beacons in case of a need for higher throughput. Figure~\ref{fig:Evaluation_Throughput_DominatedChannel} depicts the throughputs for different communication scenarios, where we compare our design to the state of the art, Esense. According to the authors, when multiple Esense senders are present and the interference from the heterogeneous wireless systems are negligible, Esense achieves the bit rate of 1.63Kbps when a single WiFi packet is used to deliver one symbol, where five consecutive packets are needed per symbol for reliable communication~\cite{ChebroluD09}. This yields a throughput of 326bps. Following the same setting, Figure~\ref{fig:Evaluation_Throughput_DominatedChannel} demonstrates the rate achievable when the channel is shared among multiple (A-)FreeBee senders. FreeBee from WiFi to ZigBee shows the highest throughput of 1.4Kbps. The rate drops in other scenarios due to the same reasons discussed in the previous section; that is, a large shift unit in ZigBee and random access delay in Bluetooth. It is notable that Bluetooth to WiFi shows higher throughput of 514bps, compared to 349bps in the Bluetooth to ZigBee case. This is because higher sampling rate achievable in WiFi compared to ZigBee (10MHz versus 7.8KHz in our setting) allows precise measurement of the beacon timings and reduces the chance of noise forming a high fold sum.

We note that the results in Figure~\ref{fig:Evaluation_Throughput_DominatedChannel} is achievable under a large number ($\sim$hundreds) of concurrent senders, which can be realized by the following two techniques: (i) To allow numerous interval-multiplexed symbols, we choose beacon intervals from a set of numbers whose pair-wise LCM is longer than the length of the sampled RSSI vector. This clearly offers many more adoptable intervals compared to just using primes, while maintaining the effect of the interval multiplexing (i.e., Proposition~\ref{prop:multiplexing} holds). (ii) As shown in the previous section, interval-multiplexed FreeBee symbols are transparent to each other under usual volume of concurrent senders (e.g., tens of FreeBee WiFi APs). Here we consider the extreme case of hundreds of senders where we take cross-interference into account. We address this issue of cross-interference by eliminating the beacons from the RSSI vector after they are interpreted. Specifically, we demodulate symbols in the ascending order according to their intervals since the FreeBee senders with shorter beacon interval have more RSSIs and those RSSIs are interference from the perspective of other senders.
Demodulating symbols in an ascending order according to the beacon interval maximally eliminates interference to the symbols from other senders and thus improve the throughput.
For instance, in WiFi to ZigBee FreeBee, SER is suppressed to be less than 5\% when over 400 symbols with intervals between 91.1ms and 2s are concurrently sent.

\subsection{Cross-technology/channel Broadcast}
This section demonstrates FreeBee's unique capability to \emph{broadcast} to receivers with heterogeneous technologies and channels.

\begin{figure}[h]
  \centering
  \includegraphics[width=0.43\textwidth]{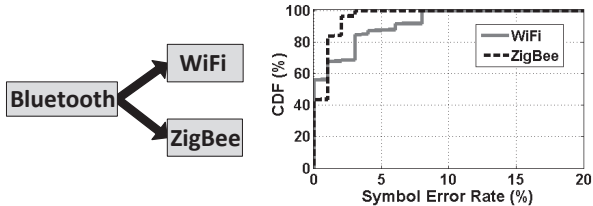}
  \vspace{-2mm}
  \caption{Simultaneous broadcast from Bluetooth to WiFi and ZigBee.}
  \label{fig:Cross_Technology_Broadcast}
\end{figure}

\noindent$\bullet$ \textbf{Cross-technology broadcast}: Depicted in Figure~\ref{fig:Cross_Technology_Broadcast}, as a generic communication framework, FreeBee allows broadcast to heterogeneous receivers with overlapping frequencies.  Our generic design is effective in practice, since it avoids the complexity associated with technology-specific operation. In the experiment, WiFi and ZigBee were set to channels 4 and 15 to listen to Bluetooth's advertisement channel of 38 simultaneously. In this particular case shown in Figure~\ref{fig:Cross_Technology_Broadcast},  WiFi receiver, compared to ZigBee, suffers from larger SER. It is due to higher noise, as WiFi channel 4 overlaps with the popular WiFi channels of 1 and 6, while ZigBee channel 15 does not.

\begin{figure}[h]
  \centering
  \includegraphics[width=0.43\textwidth]{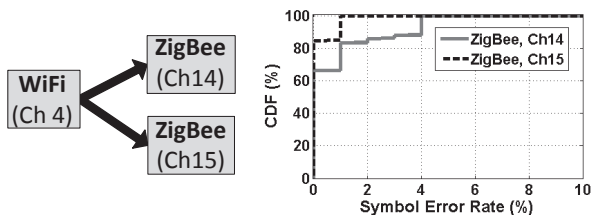}
  \vspace{-2mm}
  \caption{Simultaneous broadcast from WiFi to ZigBees on different channels.}
  \label{fig:Cross_Channel_Broadcast}
\end{figure}

\begin{figure*}[t]
 \centering
\begin{minipage}[t][0.19\textheight]{1\textwidth}
 \centering
 \subfigure[\small{FreeBee}]
 {\label{fig:Transparancy_FreeBee}
 	\includegraphics[width=0.31\textwidth, height=0.15\textheight]{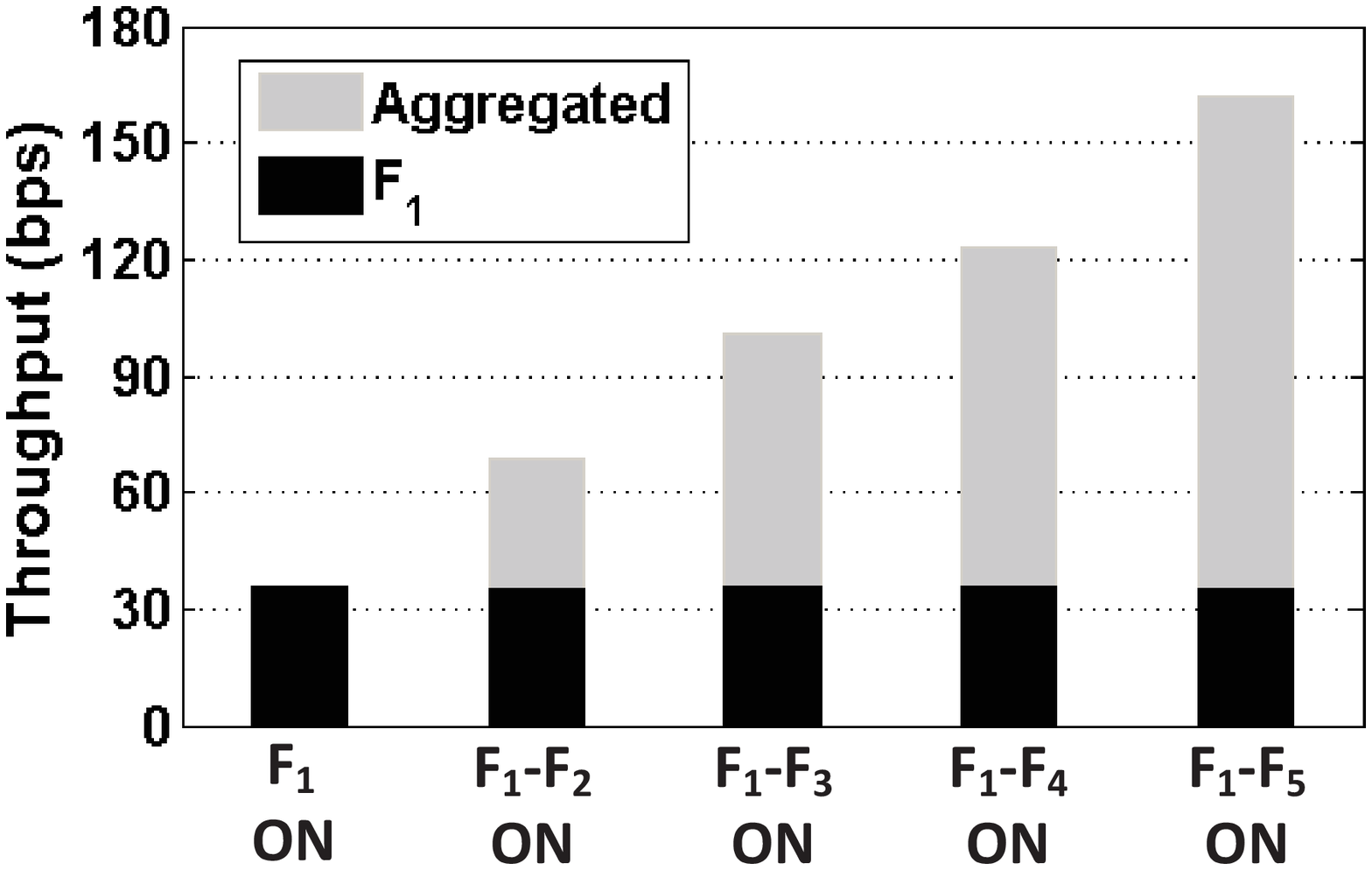}}
 	\hspace{2mm}
 \subfigure[\small{WiFi}]
 {\label{fig:Transparancy_WiFi}
		\includegraphics[width=0.31\textwidth, height=0.15\textheight]{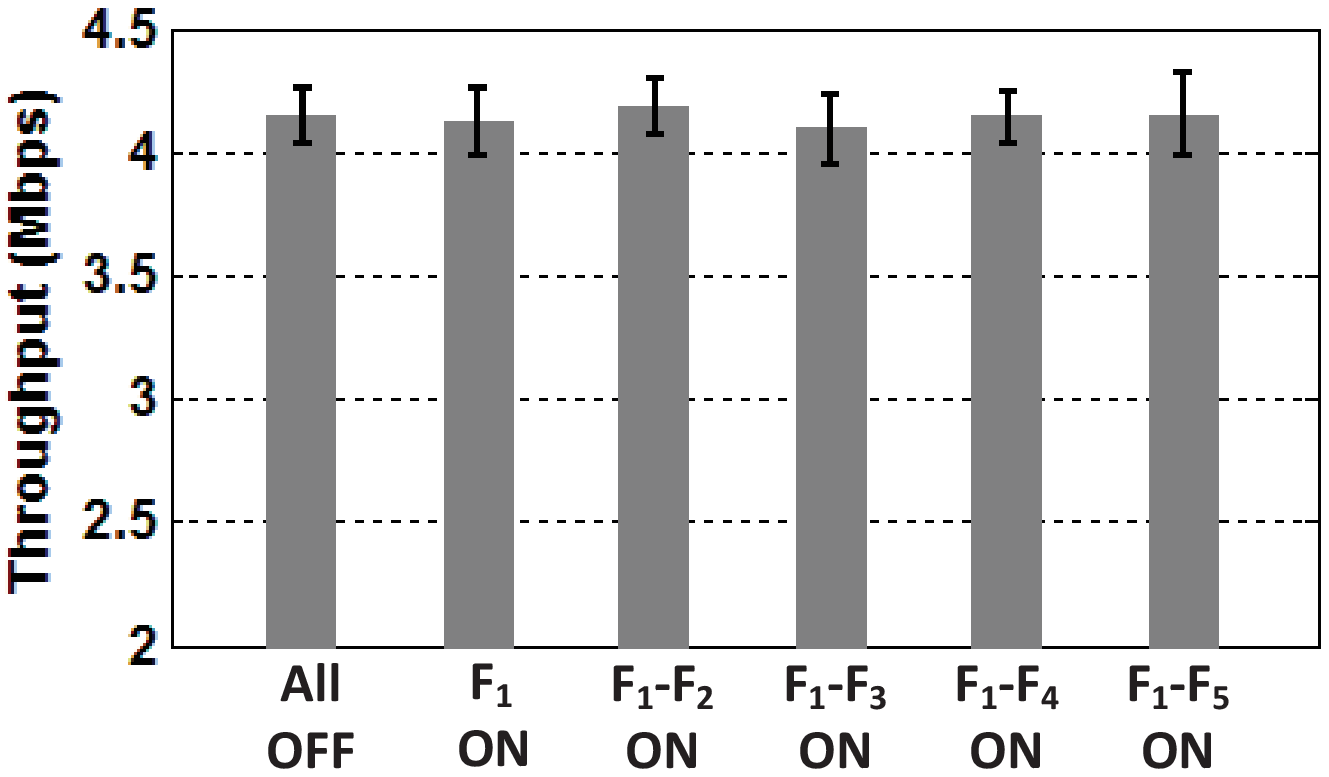}}
		\hspace{2mm}
 \subfigure[\small{ZigBee}]
 {\label{fig:Transparancy_ZigBee}
  \includegraphics[width=0.31\textwidth, height=0.15\textheight]{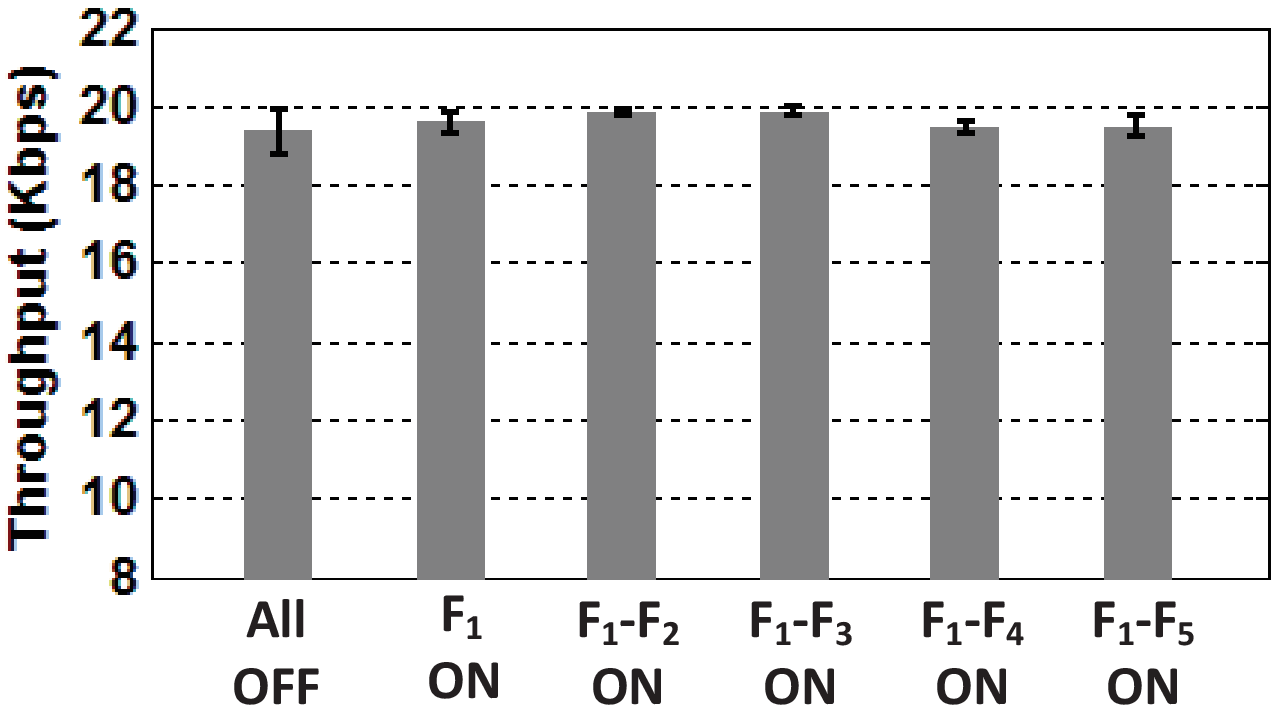}}		
	\vspace{-4mm}	
 \caption{Transparency of FreeBee to (a)concurrent FreeBee signals, (b)WiFi, and (c)ZigBee}
 \label{fig:Transparancy}
 \vspace{+2mm}
\end{minipage}
\end{figure*}



\noindent$\bullet$ \textbf{Cross-channel broadcast}: Figure~\ref{fig:Cross_Channel_Broadcast} demonstrates another FreeBee's feature where a sender with wider bandwidth (WiFi) reaches multiple narrower-band receivers (ZigBee) on different channels with a single broadcast. WiFi's bandwidth spans 20MHz while it is only 2MHz for ZigBee. The experiment result in Figure~\ref{fig:Cross_Channel_Broadcast} shows that SER on channel 14 is larger than that of channel 15. This is because channel 14 is affected by the noise from the heavily-used WiFi channel 1.

We note cross-technology broadcast and cross-channel bro-adcast can be combined to support more sophisticated scenarios where multiple heterogeneous senders deliver control symbols (through interval multiplexing) to multiple heterogeneous wireless receivers running under different channels. Such a capability would encourage further research on cross-technology coordination and control.


\begin{figure*}[t]
 \centering
\vspace{-0.1in}
\begin{minipage}[t][0.16\textheight]{1\textwidth}
 \centering
 \subfigure[\small{Walk}]
 {\label{fig:Mobile_Outdoor_Walk}
  \includegraphics[width=0.235\textwidth, height=0.14\textheight]{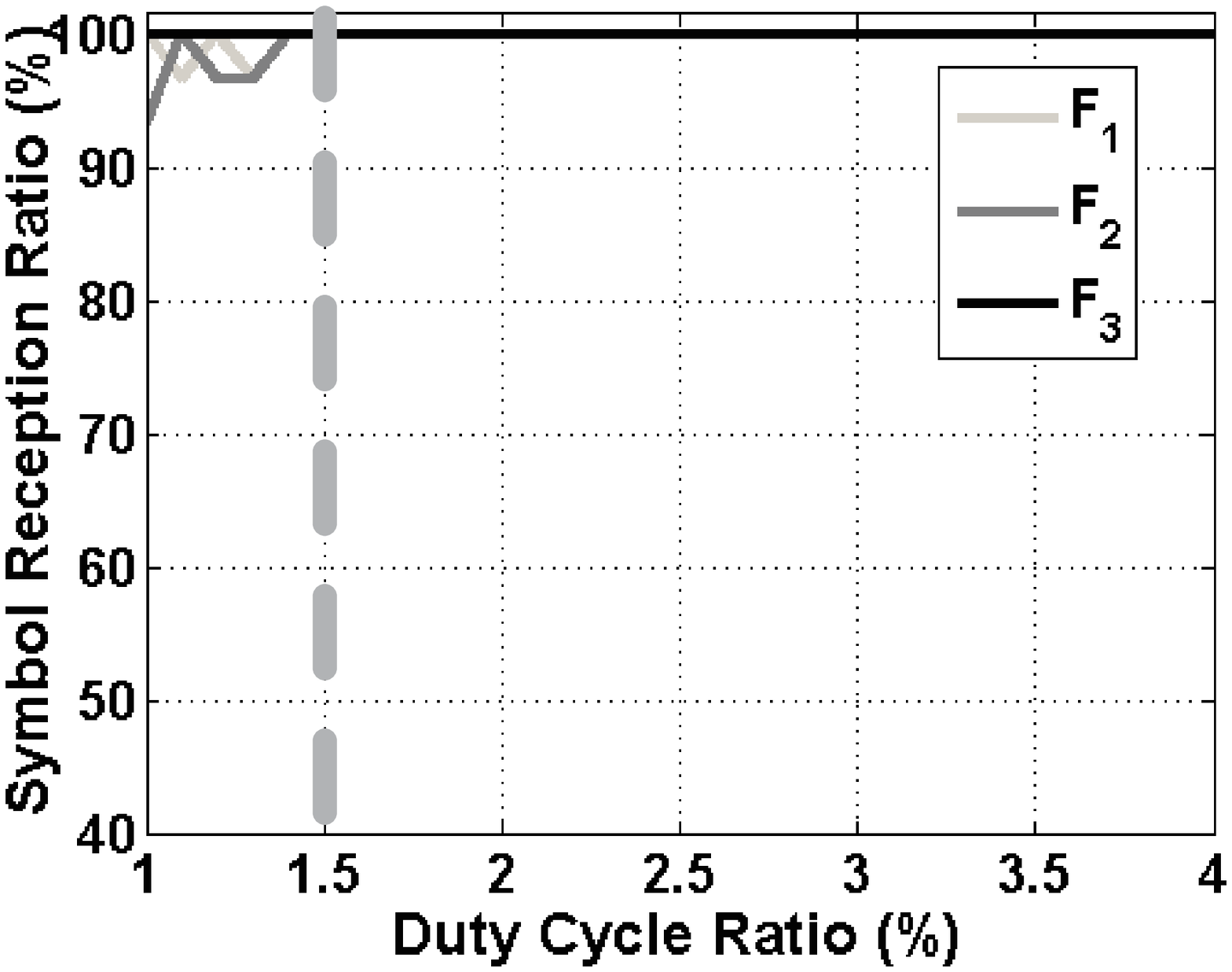}}
 \subfigure[\small{Run}]
 {\label{fig:Mobile_Outdoor_Run}
 	\includegraphics[width=0.235\textwidth, height=0.14\textheight]{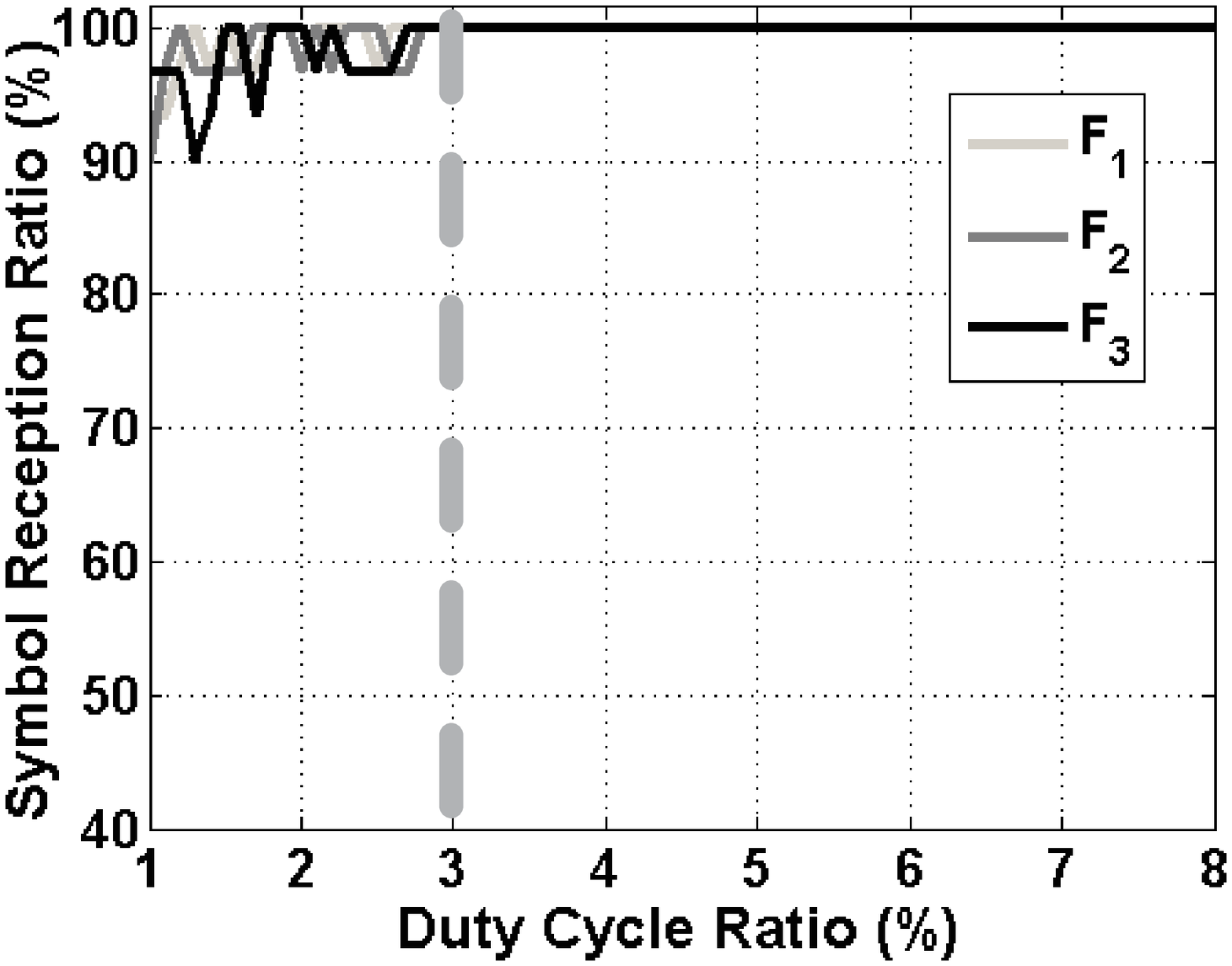}}
 \subfigure[\small{Bicycle}]
 {\label{fig:Mobile_Outdoor_Bike}
		\includegraphics[width=0.235\textwidth, height=0.14\textheight]{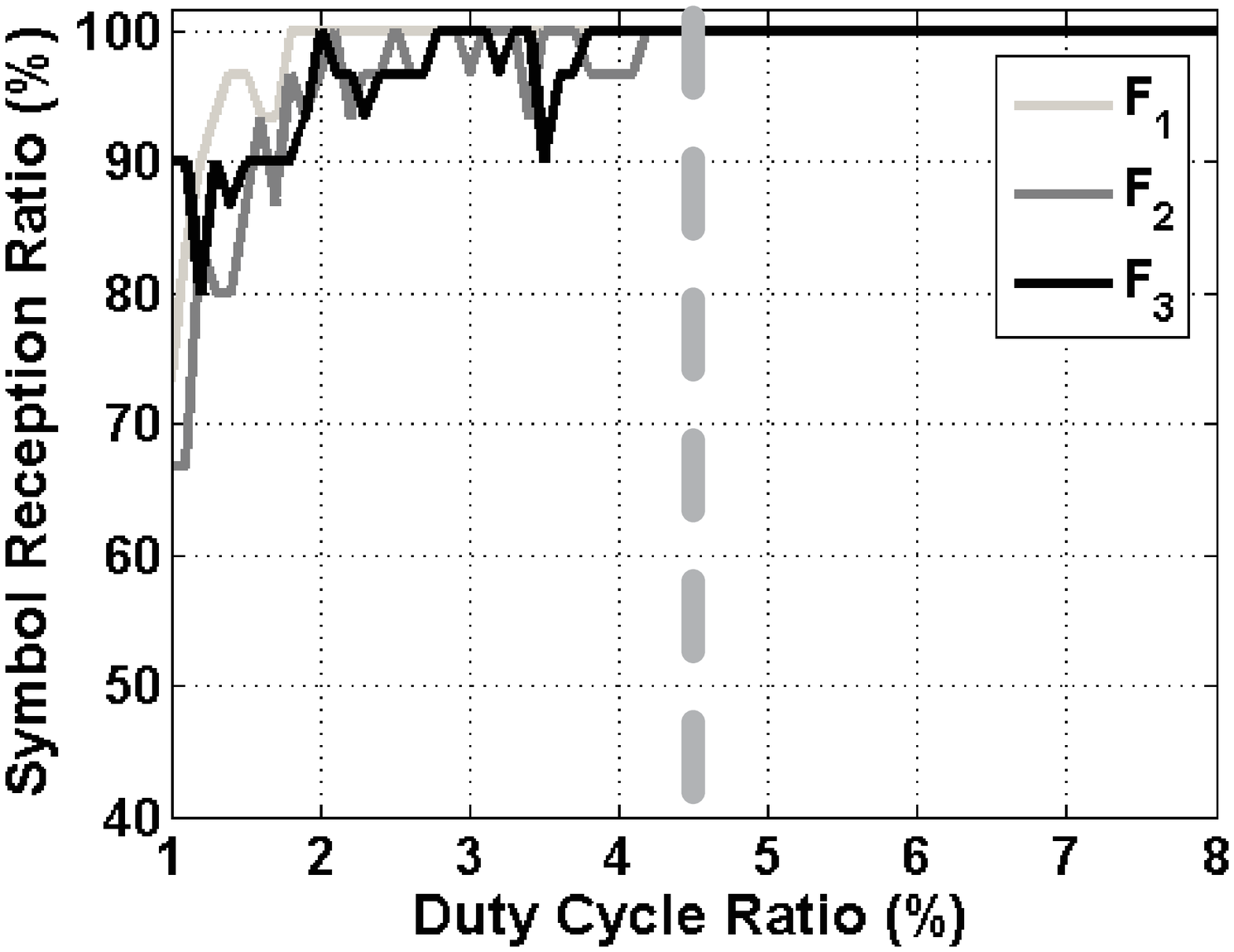}}
 \subfigure[\small{Car}]
 {\label{fig:Mobile_Outdoor_Car}
  \includegraphics[width=0.235\textwidth, height=0.14\textheight]{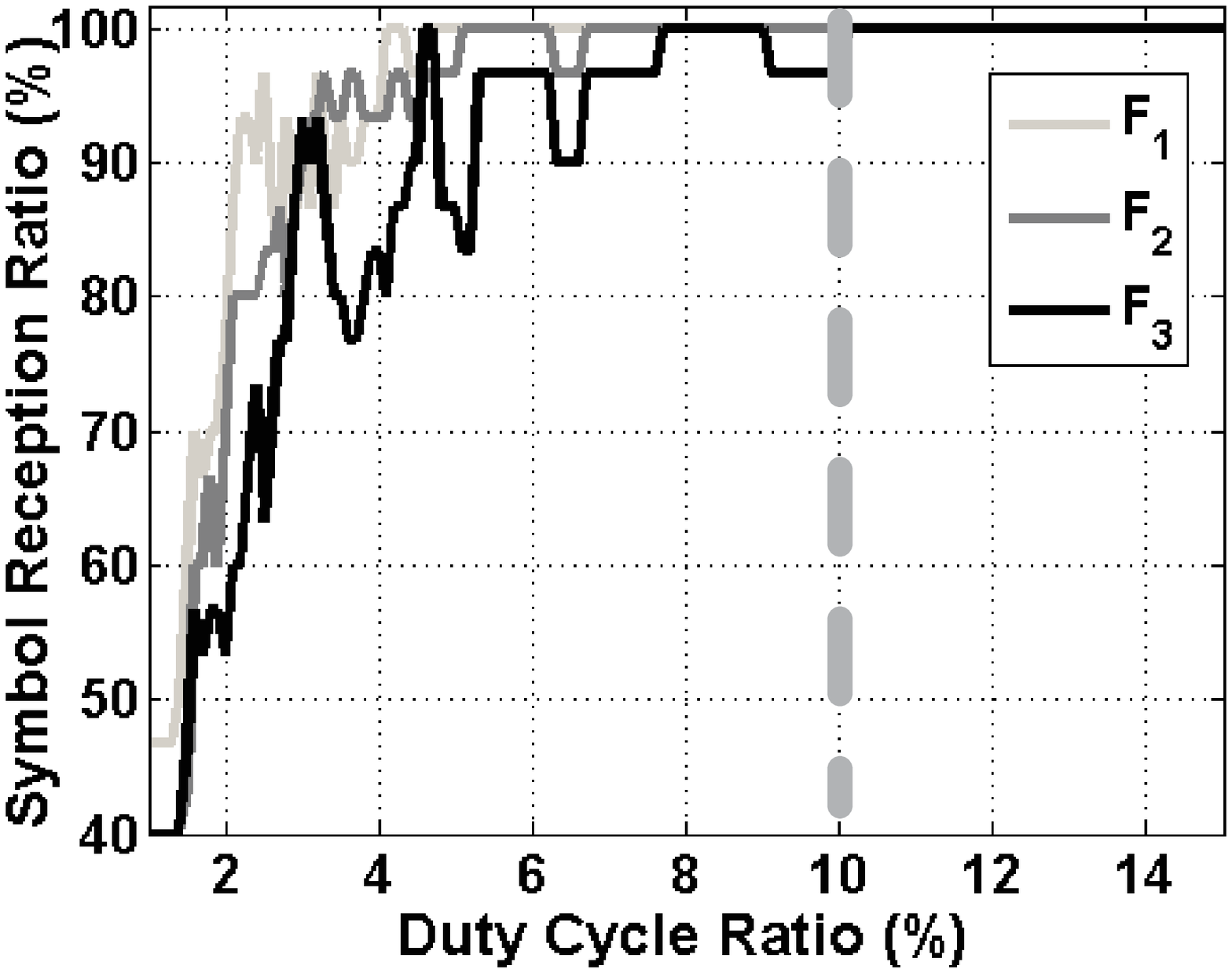}}		
\vspace{-0.1in}
 \caption{When walking, running, riding bicycle and car, reception can be ensured with duty cycling of only 1.5\%, 3\%, 4.5\% and 10\%, respectively.}
 \label{fig:Mobility_Result_Outdoor}
\end{minipage}
\end{figure*}

\subsection{Transparency}\label{sec:transparent}
We deploy five WiFi FreeBee senders, $F_1$-$F_5$, with beacon intervals of $\{91.1, 99.3, 103.4, 105.5, 109.6\}$ $ms$ corresponding to $\{89\Delta, 97\Delta, 101\Delta, 103\Delta, 107\Delta\}$. Prime intervals indicate they are interval-multiplexed. Each sender is allowed to operate in FreeBee ON or OFF modes. A sender embeds FreeBee symbols only when it is ON. When OFF, the sender acts as a legacy AP to simply transmit beacons periodically without symbol embedding. Under this setting, we observe the throughput of legacy wireless networks when a single to multiple FreeBee APs are operational, as well as the cross-interference among them. To do so, we first turn only $F_1$ to ON mode, while leaving all the others OFF. Then, we turn $F_2-F_5$ to ON one by one every 10 minutes.

\vspace{1mm}
\noindent$\bullet$ \textbf{Transparency between FreeBees}: As the black bar in Figure~\ref{fig:Transparancy_FreeBee} demonstrates, the throughput of $F_1$ is kept stable at an average of 35.5bps in the face of multiple concurrent FreeBee transmissions. This validates the performance of interval multiplexing in suppressing cross-interference among FreeBee signals. In fact simultaneous transmission helps to linearly increase the aggregated throughput up to 161.8bps, as shown by the gray bar. We note that this result is by selecting intervals close to that of a legacy AP (102.4$ms=100\Delta$) to limit the channel usage to a similar level.

The throughput is vastly enhanced by selecting shorter intervals for $F_1-F_5$ (i.e., by increasing the channel usage). We conduct experiments with various interval to validate the effects and limitations in practice. The result shows that the aggregated throughput for $F_1-F_5$ consistently increases with shorter intervals, where it reaches the maximum throughput of 536.6bps (i.e., more than 3x the standard interval case in Figure~\ref{fig:Transparancy_FreeBee}) with the intervals $\{5\Delta, 7\Delta, 11\Delta, 13\Delta, 17\Delta\}$. Further decreasing the interval, for example, to $\{2\Delta, 3\Delta, 5\Delta, 7\Delta, 11\Delta\}$, yields a lower throughput of 467.9bps due to the increased chance of cross-interference between FreeBee signals.

\vspace{1mm}
\noindent$\bullet$ \textbf{Transparency to legacy networks}: We again turn $F_1-F_5$ to ON mode one at a time every 10 minutes, during which throughput between a pair of WiFi or ZigBee nodes are measured. The results are demonstrated in Figures~\ref{fig:Transparancy_WiFi} and~\ref{fig:Transparancy_ZigBee}. For WiFi, we use \texttt{Iperf}~\cite{Iperf} to measure TCP throughput when operating in 802.11g. The WiFi sender was placed beside the FreeBees, 8m away from the WiFi receiver. For ZigBee measurement, sender and receiver were placed 5m apart, where 20byte packets were continuously transmitted at the fastest speed (i.e., with the minimum inter-packet delay). The figures show stable throughputs averaging 4.1Mbps and 19.6Kbps for WiFi and ZigBee, respectively, with small standard deviations. This suggests that both networks are unaffected by the presence of FreeBees, confirming that the free side-channel design successfully keeps FreeBee transparent to legacy networks.

\begin{figure}[h]
\centering
\begin{minipage}[t][0.37\textheight]{0.48\textwidth}
\centering
\subfigure[\small{Outdoor}]
{\label{fig:Mobile_Outdoor}
\includegraphics[width=1\textwidth]{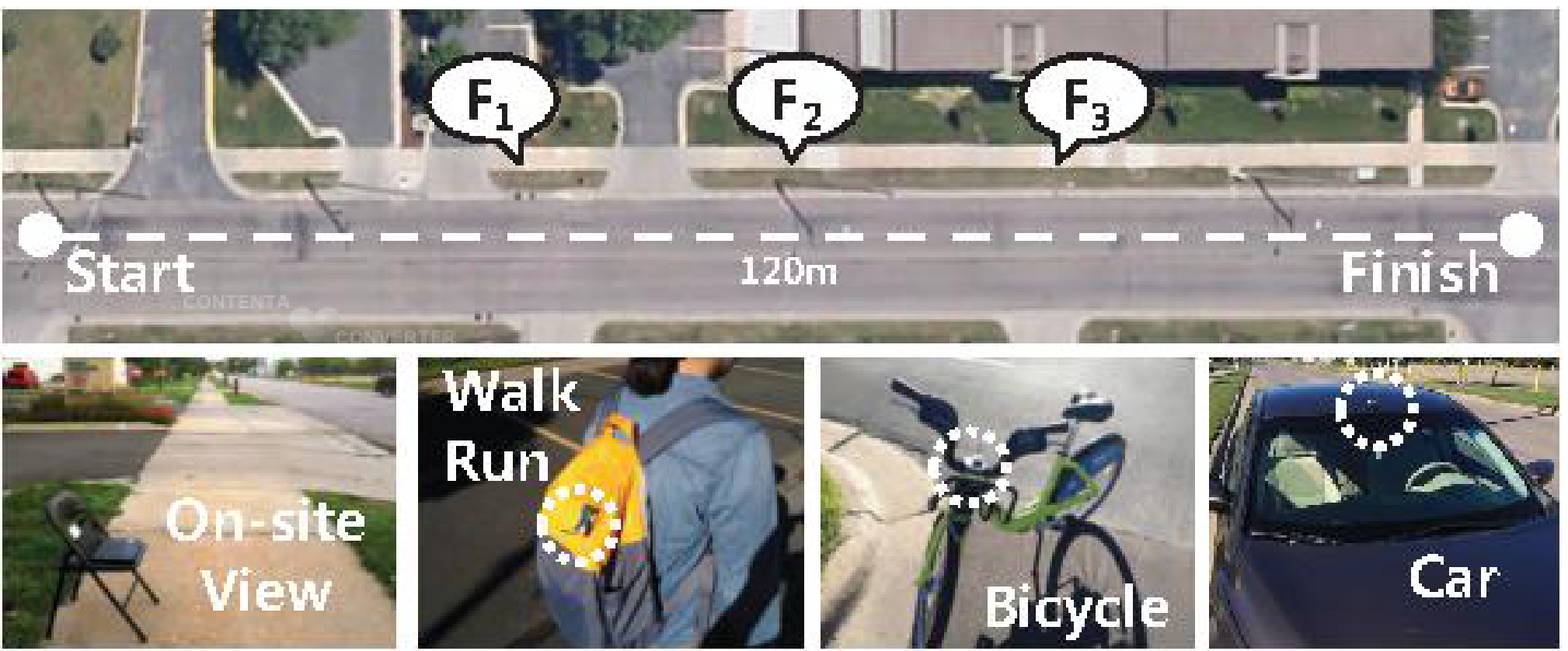}}
\subfigure[\small{Indoor}]
{\label{fig:Mobile_Indoor}
\includegraphics[width=1\textwidth]{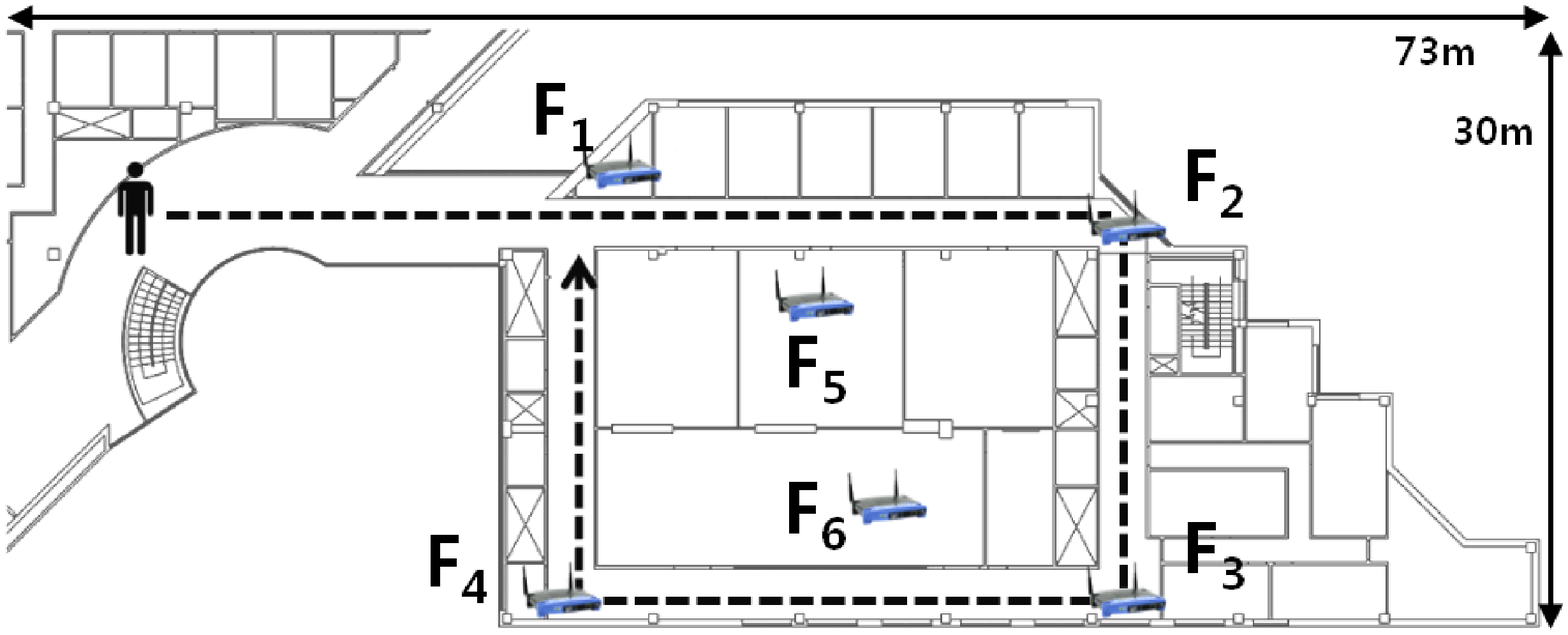}}
\caption{Mobile scenario evaluation: we experimentally reveal the impact of mobility on FreeBee on both indoor and outdoor scenarios, when the receivers are duty-cycled.}
\label{fig:Mobile_scenario_evaluation}
\end{minipage}
\end{figure}

\subsection{Mobility and Duty Cycling}\label{sec:mobile}

Here we consider a practical scenario of mobile receivers. For instance, this insight offers ideas on the feasibility of A-FreeBee for applications that requires location tracking. Furthermore, we take duty cycle into account where radios are turned off the majority of the time to preserve energy, a technique often adopted in battery-powered ZigBee networks to support long-term operations~\cite{cao20122, INFOCOM15WRx, LiLL14}. We deploy several WiFi A-FreeBee senders (laptop) in a university building and on a street, where the mobile receivers pass by the senders at different speeds: walking, running, on a bicycle, and in a car. It is worth noting that this experiment takes advantage of our free-side channel design, as (A-)FreeBee symbols can be continuously broadcast without occupying the channel, thus reaching mobile and/or duty-cycled receivers whose presence or active periods are unknown a priori.

\vspace{1mm}
\noindent$\bullet$ \textbf{On a street}: In this experiment, we deploy 3 A-FreeBee senders, 20m apart, on a street as shown in Figure~\ref{fig:Mobile_Outdoor}. The figure also shows ZigBee receivers with different degrees of mobility: walking (4.3 mph), running (6.8 mph), on a bicycle (10.8 mph) and in a car (30 mph). Thirty receivers were tested for each speed. Figure~\ref{fig:Mobility_Result_Outdoor} informs that A-FreeBee symbols from all the senders can be safely received with duty-cycling of only 1.5\%, 3\%, 4.5\% and 10\% when walking, running, cycling and riding in a car, respectively.


\begin{figure}[h]
  \centering
  \includegraphics[width=0.35\textwidth]{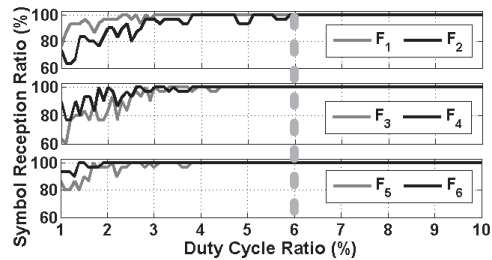}
  \caption{Duty cycling of 6\% guarantees reception in a university building.}
  \label{fig:Mobility_Result_Indoor}
\end{figure}

\noindent$\bullet$ \textbf{In a university building}: A total of 6 FreeBee senders are deployed as shown in Figure~\ref{fig:Mobile_Indoor} to show the impact of various positions, such as hallways and rooms with closed or open doors. A person walks at a moderate speed of 4.3 mph through the hallway following the arrow in the figure, with a ZigBee receiver tied to his/her backpack. The result after repeating for 30 different receivers is shown in Figure~\ref{fig:Mobility_Result_Indoor}. Overall, as indicated by the gray dotted line, duty cycling of 6\% ensures reception of A-FreeBee symbols from all the senders. A higher duty cycle ratio is required compared to the 1.5\% when walking outside in the previous experiment due to the noisy environment with 50+ legacy WiFi APs deployed.

\subsection{Receiver Overhead}\label{sec:receiverOverhead}
In this section we demonstrate light-weight storage and computation/energy overheads of (A-)FreeBee demodulation, making it affordable even to low-end devices (e.g., ZigBee mote) without disrupting the ongoing legacy protocols.

\noindent$\bullet$ \textbf{Storage}: Let us consider an example case where the sender is a WiFi AP with the beacon interval of $T=97\Delta=99.3ms$ (i.e., $\lambda$=776) and the receiver is a ZigBee mote. As explained earlier, the receiver samples RSSI every $128us$, where the total number of $\rho\times\lambda$ samples are collected for demodulation. As a RSSI value is represented as a single bit, under $\rho=5$ FreeBee requires $5\times776$=3.88Kbits (i.e., 485Bytes) of memory. This takes up less than 1\% of the storage offered in popular commodity ZigBee motes of MICAz and TelosB, where they offer 512KBytes and 1MBytes of on-board flash, respectively. Moreover, it is also important to note the followings: First, concurrent demodulation of $n$ interval-multiplexed FreeBee symbols whose intervals are $\lambda_1<\lambda_2<...<\lambda_n$, is achieved by collecting $\rho\times\lambda_n$ samples. Second, it is possible to reduce the memory usage by storing only the timing of transitions (i.e., 0$\leftrightarrow$1) in RSSI values. Third, A-FreeBee requires to store twice the number of samples compared to the basic FreeBee.

\noindent$\bullet$ \textbf{Computation/Energy}: The demodulation process can be decomposed into three parts: (i) sampling RSSI, (ii) adding to fold sum, and lastly, (iii) locating the position of the maximum fold sum. (i) and (ii) are repeatedly performed every $128us$, periodically incurring small overheads to perform a fetch and an addition operations. Upon collecting sufficient number (i.e., $\rho\times\lambda$) of samples (iii) takes place in which series of compare operations are executed in search for the maximum among $\lambda$ fold sums (for the basic FreeBee). Two largest values are sought among 2$\lambda$ for A-FreeBee, where we note that the time complexity remains linear to the number of fold sums in both versions.

We demonstrate the overhead of the entire steps (i.e., (i)-(iii)) on a off-the-shelf, low-end system -- ZigBee-compliant MICAz node. Specifically, we observe the execution time the node spends to process the demodulation. This is accurately measured by triggering a GPIO (general-purpose input/output) pin, whose activity is captured with Tektronix DPO 4054 oscilloscope. It is found that the demodulation of a symbol with an interval of $\lambda=776$ costs $51.5ms$ of computation time, consuming only $1.2mJ$ of additional energy compared to sleeping. The light overhead allows FreeBee to run on commodity low-end hardwares without disrupting the legacy protocols. Finally, we note that demodulating multiple, interval-multiplexed FreeBee symbols requires more computation/energy where they still remain affordable; For instance, listening to 10 concurrent symbols costs $98.5ms$ of computation, indicating $2.4mJ$ of energy.

\section{State of the Art} \label{sec:relatedworks}

This work lies in the intersection of (i) cross-technology interoperation and (ii) cross-technology communication, where we discuss the state-of-the-arts in the directions in relation to our work.

\subsection{Cross-technology Interoperation}
Improving network performance as well as bringing advanced services through cross-technology interoperation has been extensively investigated in many recent works~\cite{HaoZXM11, jin2011wizi, LiZH12, mahindra2014practical, sen2012dual, Sengul08, ZhouXXSM10}. They commonly adopt gateways, which are devices dedicated to bridging heterogeneous wireless technologies using their multiple radio interfaces. For example, WiBee~\cite{LiZH12} provides real-time coverage maps of multiple WiFi APs present. This is achieved by a ZigBee network that continuously monitors wireless signals in the air, and reports the observation to WiFi users through gateway. WiZi-Cloud~\cite{jin2011wizi} utilizes gateway to access the Internet through low-power ZigBee technology, which grants significant energy savings for mobile devices compared to when using WiFi. However, such gateway-based techniques commonly suffer from several inherent issues, including (i) extra cost to deploy the gateways as they are rarely available in practice, (ii) additional overhead due to in/outward traffic flowing into/out of the gateways, and (iii) deployment complexity related to the positioning of the gateways and their impact to the network performance.



\subsection{Cross-technology Communication}
Recently, a handful of work have proposed direct communication across wireless technologies, including GSense~\cite{ZhangS13}, Esense~\cite{ChebroluD09}, HoWiES~\cite{ZhangL13a}, C-Morse~\cite{Yin17}, and DCTC~\cite{Jiang17}. GSense enables cross-technology communication by prepending a customized preamble to legacy packets that contains a sequence of pulses in which gaps between them represent the data to be delivered. However, this design requires a special hardware, which hinders exploring the opportunity to utilize readily-deployed devices.

Esense and HoWiES, like our design, are compatible with off-the-shelf system; Esense establishes communication channels from WiFi to ZigBee by modulating packet lengths to those are unlikely to be used in usual WiFi traffic. HoWiES extends the Esense mechanism to convey data with combinations of length-modulated WiFi packets, and introduces a simple coding technique. HoWiES and Esense commonly require dummy packets to be pushed into the wireless channel to deliver cross-technology messages. However, due to their limited throughput of hundreds of bps, incurring traffic dedicated to the designs lead to a significant degradation of the spectrum efficiency. FreeBee does not suffer from this issue as it utilizes the mandatory beacons to enable free side-channel communication. C-Morse and DCTC also use existing traffic without introducing additional overhead, but FreeBee uniquely departs from them in a number of aspects such as (i) an exclusive feature of interval multiplexing that allows concurrent communication between multiple senders and receivers with implicit addressing, and (ii) providing the first running prototype showcasing a generic communication capability among three popular wireless technologies of WiFi, ZigBee, and Bluetooth.

We note that the FreeBee design of embedding information within the beacon timing is inspired by PPM, widely studied in the field of optical and UWB (Ultra Wide Band) communications ~\cite{carbonelli2006m, d2007energy, ghassemlooy1998digital, qiu2005ultra, shiu1999differential, wu2006weighted}. The extensive research in the area, however, is not applicable to our scenario because they are designed for pulses (which correspond to beacons in our case) that occur at their exact timing and/or can be differentiated from noise via a matched filter. Neither of these conditions is met by the beacons, since they are prone to channel access delays due to CSMA and they are indistinguishable from other traffic for receivers with incompatible PHY layers. 
 \section{Conclusion} \label{sec:conclusion}

A cross-technology communication framework of FreeBee is proposed, which aims to take advantage of the wireless coexistence via mutual supplementation. Extensive testbed experiments on three popular wireless technologies, WiFi, ZigBee, and Bluetooth, reveal that our design offers reliable symbol delivery within less than a second. Its unique free side-channel design and the utilization of short beacon frames grant high spectrum efficiency and feasibility; FreeBee exhibits 4.3$\times$ the throuhgput compared to the state-of-the-art, Esense, and demonstrates strong support for highly mobile (in a car) and extremely duty cycled ($\leq5\%$) receivers, implying its applicability to a wide range of practical applications. An examination of real WiFi deployment patterns in a shopping mall area finds that FreeBee can save 78.9\% of the energy otherwise wasted by the WiFi interface.

\ifCLASSOPTIONcaptionsoff
  \newpage
\fi

\bibliographystyle{abbrv}
\bibliography{survey}

\begin{thebibliography}{10}

\bibitem{AltBeacon}
Altbeacon.
\newblock \url{http://altbeacon.org/}.

\bibitem{CiscoAironet}
Cisco aironet 802.11a/b/g datasheet.
\newblock \url{http://www.cisco.com}.

\bibitem{Crawdad}
Crawdad: A community resource for archiving wireless data at dartmouth.
\newblock \url{http://crawdad.cs.dartmouth.edu/}.

\bibitem{hostapd}
Hostapd.
\newblock \url{http://hostap.epitest.fi/hostapd/}.

\bibitem{iBeacon}
ios: Understanding ibeacon.
\newblock \url{http://support.apple.com/kb/HT6048/}.

\bibitem{Iperf}
Iperf.
\newblock \url{https://iperf.fr/}.

\bibitem{Lorcon2}
Lorcon wireless packet injection lib.
\newblock \url{https://code.google.com/p/lorcon/}.

\bibitem{micaz}
Micaz datasheet.
\newblock \url{http://www.memsic.com/}.

\bibitem{WARPv3}
Wireless open-access research platform.
\newblock \url{http://warpproject.org/trac/}.

\bibitem{Balasubramanian09}
N.~Balasubramanian, A.~Balasubramanian, and A.~Venkataramani.
\newblock {Energy Consumption in Mobile Phones: A Measurement Study and
  Implications for Network Applications}.
\newblock In {\em Proc. ACM IMC}, 2009.

\bibitem{cao20122}
Z.~Cao, Y.~He, and Y.~Liu.
\newblock L 2: Lazy forwarding in low duty cycle wireless sensor networks.
\newblock In {\em INFOCOM}, pages 1323--1331, 2012.

\bibitem{carbonelli2006m}
C.~Carbonelli and U.~Mengali.
\newblock M-ppm noncoherent receivers for uwb applications.
\newblock {\em Wireless Communications, IEEE Transactions on}, 5(8):2285--2294,
  2006.

\bibitem{carroll2007continua}
R.~Carroll, R.~Cnossen, M.~Schnell, and D.~Simons.
\newblock Continua: An interoperable personal healthcare ecosystem.
\newblock {\em Pervasive Computing, IEEE}, 6(4):90--94, 2007.

\bibitem{ChebroluD09}
K.~Chebrolu and A.~Dhekne.
\newblock Esense: communication through energy sensing.
\newblock In {\em MOBICOM}, 2009.

\bibitem{d2007energy}
A.~A. D'Amico, U.~Mengali, and E.~Arias-de Reyna.
\newblock Energy-detection uwb receivers with multiple energy measurements.
\newblock {\em Wireless Communications, IEEE Transactions on}, 6(7):2652--2659,
  2007.

\bibitem{ding2013characterizing}
N.~Ding, D.~Wagner, X.~Chen, A.~Pathak, Y.~C. Hu, and A.~Rice.
\newblock Characterizing and modeling the impact of wireless signal strength on
  smartphone battery drain.
\newblock In {\em ACM SIGMETRICS Performance Evaluation Review}, volume~41,
  pages 29--40. ACM, 2013.

\bibitem{Dutta08}
P.~Dutta and D.~Culler.
\newblock Practical asynchronous neighbor discovery and rendezvous for mobile
  sensing applications.
\newblock In {\em Proceedings of the 6th ACM Conference on Embedded Network
  Sensor Systems}, SenSys '08, 2008.

\bibitem{ghassemlooy1998digital}
Z.~Ghassemlooy, A.~Hayes, N.~Seed, and E.~Kaluarachchi.
\newblock Digital pulse interval modulation for optical communications.
\newblock {\em Communications Magazine, IEEE}, 36(12):95--99, 1998.

\bibitem{GollakotaAKS11}
S.~Gollakota, F.~Adib, D.~Katabi, and S.~Seshan.
\newblock Clearing the rf smog: making 802.11n robust to cross-technology
  interference.
\newblock In {\em SIGCOMM}, 2011.

\bibitem{HaoZXM11}
T.~Hao, R.~Zhou, G.~Xing, and M.~Mutka.
\newblock Wizsync: Exploiting wi-fi infrastructure for clock synchronization in
  wireless sensor networks.
\newblock In {\em RTSS}, 2011.

\bibitem{HsuWK15}
A.~C. Hsu, D.~S.~L. Wei, and C.~C.~J. Kuo.
\newblock Coexistence wi-fi {MAC} design for mitigating interference caused by
  collocated bluetooth.
\newblock {\em {IEEE} Trans. Computers}, 64(2):342--352, 2015.

\bibitem{HuangYZKH13}
H.~Huang, J.~Yun, Z.~Zhong, S.~M. Kim, and T.~He.
\newblock {PSR:} practical synchronous rendezvous in low-duty-cycle wireless
  networks.
\newblock In {\em Proceedings of the {IEEE} {INFOCOM} 2013, Turin, Italy, April
  14-19, 2013}, 2013.

\bibitem{PrimeTheory}
A.~E. Ingham.
\newblock {\em The Distribution of Prime Numbers}.
\newblock Cambridge University Press, 1932.

\bibitem{Jiang17}
W.~Jiang, Z.~Yin, S.~M. Kim, and T.~He.
\newblock Transparent cross-technology communication over data traffic.
\newblock In {\em The 36th IEEE Conference on Computer Communications
  (INFOCOM)}, 2017.

\bibitem{jin2011wizi}
T.~Jin, G.~Noubir, and B.~Sheng.
\newblock Wizi-cloud: Application-transparent dual zigbee-wifi radios for low
  power internet access.
\newblock In {\em INFOCOM}, pages 1593--1601, 2011.

\bibitem{songminFREEBEE}
S.~M. Kim and T.~He.
\newblock Freebee: Cross-technology communication via free side-channel.
\newblock In {\em The 21st Annual International Conference on Mobile Computing
  and Networking (MobiCom)}, 2015.

\bibitem{LiZH12}
W.~Li, Y.~Zhu, and T.~He.
\newblock Wibee: Building wifi radio map with zigbee sensor networks.
\newblock In {\em INFOCOM}, 2012.

\bibitem{LiLL14}
Z.~Li, M.~Li, and Y.~Liu.
\newblock Towards energy-fairness in asynchronous duty-cycling sensor networks.
\newblock {\em {TOSN}}, 10(3):38, 2014.

\bibitem{LiangPLT10}
C.-J.~M. Liang, B.~Priyantha, J.~Liu, and A.~Terzis.
\newblock Surviving wi-fi interference in low power zigbee networks.
\newblock In {\em SenSys}, 2010.

\bibitem{mahindra2014practical}
R.~Mahindra, H.~Viswanathan, K.~Sundaresan, M.~Y. Arslan, and S.~Rangarajan.
\newblock A practical traffic management system for integrated lte-wifi
  networks.
\newblock In {\em MOBICOM}, pages 189--200, 2014.

\bibitem{qiu2005ultra}
R.~C. Qiu, H.~Liu, and X.~Shen.
\newblock Ultra-wideband for multiple access communications.
\newblock {\em Communications Magazine, IEEE}, 43(2):80--87, 2005.

\bibitem{RadunovicCG12}
B.~Radunovic, R.~Chandra, and D.~Gunawardena.
\newblock Weeble: enabling low-power nodes to coexist with high-power nodes in
  white space networks.
\newblock In {\em CoNEXT}, 2011.

\bibitem{sen2012dual}
S.~Sen, T.~Zhang, M.~M. Buddhikot, S.~Banerjee, D.~Samardzija, and S.~Walker.
\newblock A dual technology femto cell architecture for robust communication
  using whitespaces.
\newblock In {\em Dynamic Spectrum Access Networks (DYSPAN)}, 2012.

\bibitem{Sengul08}
C.~Sengul, M.~Bakht, A.~F. Harris, T.~Abdelzaher, and R.~H. Kravets.
\newblock On the feasibility of high-power radios in sensor networks.
\newblock {\em SIGMOBILE Mob. Comput. Commun. Rev.}, 12(1):37{\textendash}39,
  2008.

\bibitem{shiu1999differential}
D.-s. Shiu and J.~M. Kahn.
\newblock Differential pulse-position modulation for power-efficient optical
  communication.
\newblock {\em Communications, IEEE Transactions on}, 47(8):1201--1210, 1999.

\bibitem{INFOCOM15WRx}
D.~Spenza, M.~Magno, S.~Basagni, L.~Benini, M.~Paoli, and C.~Petrioli.
\newblock {B}eyond {D}uty {C}ycling: {W}ake-up {R}adio with {S}elective
  {A}wakenings for {L}ong-lived {W}ireless {S}ensing {S}ystems.
\newblock In {\em INFOCOM}, 2015.

\bibitem{srinivasan2010empirical}
K.~Srinivasan, P.~Dutta, A.~Tavakoli, and P.~Levis.
\newblock An empirical study of low-power wireless.
\newblock {\em ACM Transactions on Sensor Networks (TOSN)}, 6(2):16, 2010.

\bibitem{Staelin69}
D.~Staelin.
\newblock Fast folding algorithm for detection of periodic pulse trains.
\newblock {\em Proceedings of the IEEE}, 57:724 -- 725, 1969.

\bibitem{IEEE80211}
{Wireless LAN Working Group}.
\newblock Ieee standard part 11: Wireless lan medium access control (mac) and
  physical layer (phy) specifications.
\newblock {\em IEEE Std 802.11-2012 (Revision of IEEE Std 802.11-2007)}, pages
  1--2793, March 2012.

\bibitem{IEEE802154}
{Wireless Personal Area Network (WPAN) Working Group}.
\newblock Ieee standard part 15.4: Low-rate wireless personal area networks
  (lr-wpans).
\newblock {\em IEEE Std 802.15.4-2011 (Revision of IEEE Std 802.15.4-2006)},
  pages 1--314, Sept 2011.

\bibitem{wu2006weighted}
J.~Wu, H.~Xiang, and Z.~Tian.
\newblock Weighted noncoherent receivers for uwb ppm signals.
\newblock {\em Communications Letters, IEEE}, 10(9):655--657, 2006.

\bibitem{Yin17}
Z.~Yin, W.~Jiang, S.~M. Kim, and T.~He.
\newblock C-morse: Cross-technology communication with transparent morse
  coding.
\newblock In {\em The 36th IEEE Conference on Computer Communications
  (INFOCOM)}, 2017.

\bibitem{zhang2011enabling}
X.~Zhang and K.~G. Shin.
\newblock Enabling coexistence of heterogeneous wireless systems: case for
  zigbee and wifi.
\newblock In {\em MobiHoc}, 2011.

\bibitem{zhang2013cooperative}
X.~Zhang and K.~G. Shin.
\newblock Cooperative carrier signaling: harmonizing coexisting wpan and wlan
  devices.
\newblock {\em Networking, IEEE/ACM Transactions on}, 21(2):426--439, 2013.

\bibitem{ZhangS13}
X.~Zhang and K.~G. Shin.
\newblock Gap sense: Lightweight coordination of heterogeneous wireless
  devices.
\newblock In {\em INFOCOM}, 2013.

\bibitem{ZhangL13a}
Y.~Zhang and Q.~Li.
\newblock Howies: A holistic approach to zigbee assisted wifi energy savings in
  mobile devices.
\newblock In {\em INFOCOM}, 2013.

\bibitem{ZhaoWZZL14}
Z.~Zhao, X.~Wu, X.~Zhang, J.~Zhao, and X.~Li.
\newblock Zigbee vs wifi: Understanding issues and measuring performances of
  their coexistence.
\newblock In {\em IPCCC}, pages 1--8, 2014.

\bibitem{ZhouXXSM10}
R.~Zhou, Y.~Xiong, G.~Xing, L.~Sun, and J.~Ma.
\newblock Zifi: wireless lan discovery via zigbee interference signatures.
\newblock In {\em MOBICOM}, 2010.

\end{thebibliography}

\end{document}